\newtheorem{theorem}{Theorem}[section]
\newtheorem{proposition}[theorem]{Proposition}
\newtheorem{remark}{Remark}
\newtheorem{algorithm}{Algorithm}
\newtheorem{example}{Example}
\newtheorem{definition}{Definition}
\newcommand{\ud}{\mathrm{d}}
\newcommand{\K}{\mathrm{K}}
\newcommand{\T}{\mathcal{T}}
\newcommand{\M}{\mathcal{M}}
\begin{document}
\newgeometry{margin=2cm,top=2cm,bottom=2cm}

\title{
        A structural approach to default modelling with pure jump processes 
        }

\author{Jean-Philippe Aguilar        \thanks{Corresponding author.
        Cov\'ea Finance, Quantitative Research Team, 8-12 rue Boissy d'Anglas, FR-75008 Paris. 
        \newline
        \hspace*{0.5cm} Email: jean-philippe.aguilar@covea-finance.fr, nicolas.pesci@covea-finance.fr, victor.james@covea-finance.fr
        \newline
        \hspace*{0.5cm} The views expressed in this article are those of the authors. They do not purport to reflect the views of Cov\'{e}a Finance.}
        \and
        Nicolas Pesci
        \and
        Victor James
        }
\date{February 11, 2021, revised July 1st, 2021.}

\maketitle
\thispagestyle{empty}

\begin{abstract}
We present a general framework for the estimation of corporate default based on a firm's capital structure, when its assets are assumed to follow a pure jump L\'evy processes; this setup provides a natural extension to usual default metrics defined in diffusion (log-normal) models, and allows to capture extreme market events such as sudden drops in asset prices, which are closely linked to default occurrence. Within this framework, we introduce several pure jump processes featuring negative jumps only and derive practical closed formulas for equity prices, which enable us to use a moment-based algorithm to calibrate the parameters from real market data and to estimate the associated default metrics. A notable feature of these models is the redistribution of credit risk towards shorter maturity: this constitutes an interesting improvement to diffusion models, which are known to underestimate short term default probabilities. We also provide extensions to a model featuring both positive and negative jumps and discuss qualitative and quantitative features of the results. For readers convenience, practical tools for model implementation and GitHub links are also included.

\bigskip

\noindent {\bfseries Keywords:} L\'evy process; Gamma process; Inverse Gaussian process; One-sided process; 
Variance Gamma process; Credit risk; Distance to default; Default probability.

\noindent {\bfseries AMS subject classifications (MSC 2020):} 60E07, 60E10, 62P05, 91G30, 91G40.

\noindent {\bfseries JEL Classifications:} C02, G12, G32.
\end{abstract}

\newpage
\setcounter{page}{1}

\section{Introduction}\label{sec:intro}

We start the paper by providing a general introduction to the so-called structural approach to credit risk, recalling the main limitations of log-normal models and introducing the more realistic class of pure jump models. We also detail the main contributions of the paper in this context, as well as its overall structure.

\subsection{Structural modelling, credit spread puzzle and jump risk}
A popular approach to credit risk and corporate default modelling is to assume that the value of a firm's assets is driven by a certain stochastic dynamics, and that a default occurs when the realization of this process is lower than the facial value of the firm's debt at its maturity. This approach, which has been called the \textit{structural approach} by \cite{Duffie99}, was pioneered by the works of \cite{Black73} and \cite{Merton74}, the authors introducing a diffusion process to model the underlying asset log returns; this setup is now universally known as the Merton model, and has subsequently been extended to take into account various features, such as the existence of multiple maturities for the firm's debt \citep{Geske77} or the possibility for the default to occur prior to the debt's maturity \citep{Black76,Longstaff95,Leland96}; these latter models are often referred to as {\it first passage} or {\it barrier} models. Extensions to sovereign issuers have also been considered \citep{Gray07}. For a complete overview of default modelling and related topics within the structural approach (as well as in the alternative class of {\it reduced-form} models), we refer to \cite{Duffie12,Lipton13}.

A major difficulty in the structural approach lies in its calibration, because most of the corporate debt is not traded and, therefore, it is not possible to obtain directly the value of a firm's assets by summing its equity and debt values. In this context, assuming that assets log returns are driven by a diffusion process (i.e., a Wiener process, or Brownian motion) is particularly useful, because it allows for a closed-form relationship between a firm's value and its equity, via the celebrated Black-Scholes formula. This closed formula opens the way to simple historical calibrations for the asset's volatility, by solving a system of nonlinear equations with classical solvers such as the Newton-Raphson method in the algorithm developed by \cite{Vassalou04}; other calibrations techniques for diffusion models have also been introduced, for instance based on the book value of the debt in \cite{Eom04} or on maximum likelihood estimations (see details in \cite{Duan04}). 

The Brownian hypothesis, however, has been severely criticized for being unrealistic and, notably, for producing an almost zero default probability for short maturities: if this was true, then short term bonds should have zero credit spread, which is typically not the case, as first noted by \cite{Jones84} (see also many subsequent discussions such as \cite{Lyden00} or \cite{Demchuk09} for instance). This underestimation of short term default probabilities and of theoretical credit spreads is known as the {\it credit spread puzzle} and has been evidenced many times and for every ranking, from high yield to investment grade issuers (see a recent overview in \cite{Huang19}) and even sovereign issuers (see \cite{Duyvesteyn15} and references therein). Among the plausible explanations for this discrepancy are the fact that other factors can influence credit spreads, such as taxes, liquidity premia \citep{D'Amato03} and jump risk \citep{Bai20}.  

This phenomenon is also evidenced when examining 1 year historical default rates; in 2019, the S\&P global default rate (all issuers) was of 1.03\%, and, on the non investment grade or speculative sector corresponding to BB ratings and lower, it reached 2.10\% \citep{SP20b}. With the dramatic COVID19 events, the situation has worsened, and the default rate is expected to reach 10\% in 2021 on the non investment grade sector, which now represents 30 \% of all issuers, at an all time high \citep{SP20a}. Such levels are similar and even higher to those following the global financial crisis (GFC) of 2007-2008 (1 year speculative default rate was of 9.94\% in 2009), and cannot be obtained from the classical Merton model, in particular because the normal assumption fails to reproduce extreme market events. A natural idea is therefore to introduce jumps in the assets dynamics, in order to better capture short term defaults occurring after a brutal drop in the value of a company's assets following, for instance, earning announcements, central bank meetings or major political events.

Jump processes can be of two kinds: jump-diffusion processes, or pure jumps processes, and both are conveniently described by the formalism of L\'evy processes. Jump diffusion processes were introduced in structural credit risk modelling by \cite{Zhou97}, by adding a Poisson process whose discrete jumps are normally distributed to the usual diffusion process, in order to materialize sudden changes in the firm's assets value; other distributions for the magnitude of the Poisson jumps have subsequently been considered, such as negative exponential distributions in \cite{Lipton02}, leading to higher short term probabilities and more realistic credit spread curves. Self exciting Hawkes process have also been introduced in \cite{Ma16}, a model for which an analytical formula for the equity value has been recently derived in \cite{Pasricha21}. Pure jump processes allow for an even richer dynamics and, notably, an interpretation in terms of {\it business time} (differing from the {\it operational time}), or the possibility for jumps to occur arbitrarily often on any time interval; such processes were introduced in the construction of credit risk models in the late 2000s and early 2010s, notably in \cite{Madan08} for one-sided processes (i.e., featuring downward jumps only) and from the point of view of first passage models, and in \cite{Fiorani06,Fiorani10, Luciano09} for double-sided processes, with CDS-based calibrations. Extensions to multivariate processes have also been studied \citep{Marfe12}.

In this paper, we would like to demonstrate that pure jump processes are well suited to the structural approach, thus confirming the initial works cited above, but also that practical formulas can be derived for the equity value, thus allowing for a precise issuer-based calibration. We will also show that incorporating realistic features such as non-normality of returns, jumps or asymmetry, allows to better capture the probability of a default occurrence in particular during turbulent times, and provides a better fit to historical default rates notably for speculative issuers. This approach appears to be particularly relevant in the current COVID19 period: as already mentioned, short term defaults are expected to increase, as they did during former major crisis (GFC, dot com crisis \dots). Moreover, while former peaks of default events  lasted typically one or two years before going back to "usual" default rates, it is likely that, this time, the situation may last far longer, given the enormous amounts of debt that have been accumulated and the feeble revenues that most firms were able to collect between lockdown periods.



\subsection{Contributions of the paper}

We will focus on:
\begin{itemize}
    \item[(a)] Extending the structural approach to the case where the stochastic process is no longer a Brownian motion but a more general L\'evy process, defining credit risk metrics in that context and showing how model parameters can be calibrated using observable issuers data;
    \item[(b)] Deriving closed pricing formulas for the equity value of a firm when the L\'evy process is assumed to be spectrally negative, and implementing the calibration algorithm using these formulas;
    \item[(c)] Showing that obtained default probabilities are higher than the Merton default probabilities for short maturities, and comparing real world default probabilities with historical default rates;
    \item[(d)] Using recent pricing formulas in the context of double-sided processes and, like in (b) and (c), discussing calibration and obtained probabilities.
\end{itemize}

\subsection{Structure of the paper}

In section \ref{sec:dynamics}, we recall basic facts on L\'evy processes, we define credit risk metrics (distance to default, default probability) in this context, and we present the calibration algorithm that extends the algorithm of \cite{Vassalou04} to the case of pure jump processes. Then, in section \ref{sec:one_sided}, we introduce two spectrally negative processes which are similar to the Gamma and inverse Gaussian subordinators, but with a L\'evy measure translated to the negative real axis, allowing us to define the NegGamma and NegIG credit risk models; we derive explicit formulas for equity values in these models and, using the calibration algorithm, we determine model parameters, compute default probabilities (both in the risk-neutral and real world cases) and compare them with Merton probabilities as well as with historical default rates. We proceed to a similar discussion in section \ref{sec:symmetric}, but for a process featuring a double-sided L\'evy measure which, for simplicity, is assumed to be symmetric; equity value in this case is based on series expansions whose terms are powers of the distance to default. Section \ref{sec:conclusion} is dedicated to concluding remarks and future works. For reader's convenience, we have also equipped the paper with an appendix, which summarizes the papers notations, provides details on the data set of test issuers used in the study and on parameters stability, and contains a link to the GitHub where the calibration code can be freely accessed.

\section{Pure jump asset dynamics and risk metrics}\label{sec:dynamics}

In this section, we start by recalling some fundamental concepts on L\'evy processes and asset pricing (we refer to the classical references \cite{Bertoin96} and \cite{Schoutens03} for all technical details), that extend the classical log-normal Merton model. Then, we introduce the generalization of distance to default and default probability in this context, as well as the general algorithm that will be used to calibrate the various models in the rest of the paper.

\subsection{Model setup}\label{subsec:model_setup}

\subsubsection{Model formulation}
Following the classical setup of structural credit risk modelling, we consider a company possessing a total asset $V_A(t)$ at time $t\in[0,T]$, financed by an equity $V_E(t)$ and a zero coupon debt of maturity $T$ and face value $K$:
\begin{equation}
    V_A(t) \, = \, V_E(t) \, + \, K
    .
\end{equation}
At $t=T$, two situations can occur:
\begin{itemize}
    \item[-] $V_A(T)\geq K$: the issuer has enough financial resource to pay off for the full amount of its debt $K$, and in that case its equity still has a positive value equal to $V_A(T) - K$;
    \item[-] $V_A(T) < K$: the company is in default, and in that case its equity value falls down to $0$.
\end{itemize}
In short, we see the equity value of a firm as a European call option written on its assets, whose strike price (resp. maturity) equals the face value (resp. the maturity) of the company's debt; following the usual financial notation, we will therefore write that
\begin{equation}\label{V_E}
    V_E(T) \, = \, [ V_A(T) - K]^+
    .
\end{equation}

We deliberately use a European formulation instead of a first passage one, which, at first sight, could appear more precise as default can occur at all $t\leq T$. The reason is that our purpose is to find a good balance between realistic features (presence of brutal market events, materialized by a pure jump dynamics for $V_A(t)$), and computational tractability (allowing for a simple issuer-based calibration procedure, necessitating only observable equity prices). If we had chosen a first passage formulation, then the equity price \eqref{V_E} would have become a barrier option, for which a simple closed formula exists if $V_A(t)$ is log-normal, but not if it follows a purely discontinuous process, depriving us for a simple calibration algorithm based on the inversion of an equity formula. Extending pricing formulas for barrier options to the case of pure jump processes is, actually, an interesting subject in itself and could be made possible at least in the $\alpha$-stable case by using a generalization of the reflection principle (see \cite{Bingham73}) which states that, if $V_A(t)$ follows an $\alpha$-stable process ($\alpha\in[0,2]$), then
\begin{equation}
    \mathbb{P} \left[ \underset{t\in[0,T]}{\mathrm{sup}} V_A(t) \geq K \right] \, = \, 
    \alpha \mathbb{P} \left[ V_A(T) \geq K \right]
    ,
\end{equation}
which degenerates into the usual Brownian principle when $\alpha\rightarrow 2$. But, as stable distributions have infinite moments except for the degenerate case $\alpha=2$, we would be unable, again, to use a simple moment matching procedure to calibrate model parameters. On the contrary, focusing on a European approach but combining this simple approach with some well-chosen pure jump processes (for which we will be able to derive closed formula for equity prices), will allow to combine both the advantages of computational simplicity and realistic description of the assets behaviour. 

As we will see in the next sections, the pure jump L\'{e}vy processes we will introduce to model the dynamics of $V_A(t)$ will mainly be spectrally negative (that is, feature downward jumps only) because it is this skewed behaviour that has the biggest impact on the credit risk of a firm. The main intuition behind this is that, even if we measure default probability only at $t=T$, downward jumps (materializing a sudden drop in assets value) can occur anytime in $[0,T]$; if such a jump occurs not too far from maturity, and if no positive jumps are allowed, then it will not be possible for asset prices to grow up to the debt level before maturity and default will almost surely occur.
For sake of generality, we will also extend the study to the case of a symmetric process. 

Let us mention that more general pure jump processes have already been considered for the purpose of default modelling (for instance the Variance Gamma process, see \cite{Fiorani10}), featuring one or more supplementary degree of freedom (allowing, for instance, to control the asymmetry of tails). But, for such processes, no simple closed formula can be derived for equity prices, and calibration is typically performed using CDS; as they are not traded for every name and maturity, one has to use CDS indices as proxies. On the contrary, the main novelty of our approach is that we are able to derive convenient closed-form equity formulas, which opens the way to a simple calibration algorithm that only necessitates equity prices as input; the L\'{e}vy processes involved are less sophisticated than an asymmetric Variance Gamma process, but remain far more realistic than the usual diffusion approach.

\subsubsection{Dynamics}\label{subsec:dynamics}
We will assume that, given a filtered probability space $(\Omega = \mathbb{R}_+,\mathcal{F},\{ \mathcal{F} \}_{t\geq 0},\mathbb{P})$, the instantaneous variations of the firm's assets $V_A(t)$ can be written down in local form as
\begin{equation}\label{SDE}
    \frac{\ud V_A(t)}{V_A(t-)} \, = \, r \ud t \, + \, \ud X_t \, , \hspace*{0.5cm} t\in[0,T]
    ,
\end{equation}
under an admissible equivalent martingale measure (EMM), or risk-neutral (RN) measure, traditionally denoted by $\mathbb{Q}$ (see discussion thereafter in subsection \ref{subsec:dd_radp}). In \eqref{SDE}, $r$ is the (continuously compounded) risk-free interest rate and $X_t$ is a L\'evy process, that is, a c\`{a}dlag process satisfying $X_0=0$ ($\mathbb{P}$-almost surely), and whose increments are independent and stationary. Note that when $X_t = W_t$ where $W_t \sim \mathcal{N}(0,\sigma^2 t)$ is a centered Wiener process, then $V_A(t)$ follows a geometric Brownian motion, and in that case \eqref{SDE} corresponds to the traditional Merton model.

The solution to the stochastic differential equation \eqref{SDE} is the exponential process
\begin{equation}\label{SDE_solution}
    V_A(T) \, = \, V_A e^{ ( r + \omega_X) T + X_T}
\end{equation}
where $V_A:=V_A(0)$ and where $\omega_X$ is a convexity adjustment computed in a way that the discounted asset prices are a $\mathbb{Q}$-martingale, i.e.
\begin{equation}\label{martingale}
    \mathbb{E}^{\mathbb{Q}}
    \left[ V_A(T) | V_A \right]
    \, = \, e^{rT} V_A
    .
\end{equation}
As $X_t$ is a L\'evy process, its probability distribution is infinitely divisible and, consequently, its characteristic function admits a semigroup structure. In other words, there exists a function $\psi_X (u)$ called characteristic exponent, or L\'evy symbol, such that $\mathbb{E}\left[ e^{i u X_t} \right] \, = \, e^{t\psi_X(u)}$, and which is entirely characterized by the L\'evy-Khintchine formula
\begin{equation}\label{Levy_Khintchine}
    \psi(u) \, = \, a~i u \, - \, \frac{1}{2} b^2 u^2 \, + \, \int\limits_{ - \infty}^{+\infty} \left( e^{i u x} - 1 - i u x \mathbbm{1}_{\{ |x| < 1 \}}  \right) \Pi_X (\ud x)
    ,
\end{equation}
where $a$ is the drift and is $b$ the Brownian (or diffusion) component. The~measure $\Pi_X(\ud x)$, assumed to be concentrated on $\mathbb{R}\backslash\{0\}$ and satisfy
\begin{equation}
    \int\limits_{-\infty}^{+\infty} \, \mathrm{min} (1,x^2) \, \Pi_X(\ud x) \, < \, \infty
    ,
\end{equation}
is called the L\'evy measure of the process, and~determines its tail behaviour and the distribution of jumps. If $b\neq 0$ and the L\'evy measure has finite total mass, one speaks of a jump-diffusion process and in that case only a finite number of jumps can occur on any finite time interval; among the most popular distributions for the random size of jumps are Gaussian distributions \citep{Zhou97} or exponential distributions \citep{Lipton02}. Non-random discrete (Dirac) jumps have also been introduced in \cite{Lipton09}, and shown to provide remarkably good results when implemented in a model featuring multi-name joint dynamics, notably because they allow to achieve higher levels of correlations than exponential jumps. Among other jump-diffusion models, let us also cite the works by \cite{Hilberink02} or \cite{LeCourtois06} (which is the credit risk counterpart of the geometric Kou model \citep{Kou02} for option pricing). When $b=0$, one speaks of a pure jump process and, if furthermore $\Pi_X(\mathbb{R}) = \infty$, we say that the process has infinite activity (or intensity). For such processes, an infinite number of jumps can occur on any time interval, making for a very rich and complex dynamics; among the most prominent pure jump L\'evy processes are the so-called subordinators (i.e. models for which $\Pi_X(\mathbb{R}_-)=0$) such as the Gamma or Inverse Gaussian processes, or models featuring both positive and negative jumps such as the Variance Gamma (VG) process \citep{Madan98} or the Normal Inverse Gaussian (NIG) process \citep{Barndorff95,Barndorff97}. Let us also mention that the drift $a$ actually depends on the choice of the so-called truncation function (in \eqref{Levy_Khintchine} we choose $i u x \mathbbm{1}_{\{ |x| < 1 \}} )$, but the choice is not unique), and therefore can be eliminated by a suitable choice of truncation. Last, let us remark that, as a direct consequence of the definition of the L\'evy symbol, the condition \eqref{martingale} can be reformulated as:
\begin{equation}\label{martingale_Characteristics}
    \omega_X \, = \, -\psi_X(-i)
    .
\end{equation}
\begin{remark}
\label{remark:omega_BSM}
    If $X_t = W_t$ where $W_t \sim \mathcal{N}(0,\sigma^2 t)$ is a centered Wiener process, it is well known that $\mathbb{E}\left[ e^{i u W_t} \right] \, = \, e^{-\frac{\sigma^2u^2 }{2}t}$. From the point of view of the representation \eqref{Levy_Khintchine}, this means that the L\'evy measure is identically null and that the L\'evy symbol resumes to its diffusion component i.e. $\psi_{W}(u) = -\frac{\sigma^2 u^2}{2}$; it results that the convexity adjustment in this case is
    \begin{equation}
        \omega_{W} \, = \, - \psi_{W}(-i) \, = \, -\frac{\sigma^2}{2}
    \end{equation}
which is the usual Gaussian (Black-Scholes-Merton) adjustment.
\end{remark}

Let us now introduce the quantity \begin{equation}\label{DD}
    k_X \, := \, \log\frac{V_A}{K} + (r+\omega_X)T
\end{equation} 
which, in the context of option pricing, is generally called the log forward moneyness, and let us denote by $f_X(x,t)$ the density of the process $X_t$. Then, we can formulate a proposition:
\begin{proposition}[Equity value]
\label{prop:equity}
Under the dynamics \eqref{SDE}, the equity value of a firm can be written as
\begin{equation}\label{equity}
    V_E \, = K e^{-rT} \, 
    \int\limits_{-k_X}^{\infty} \, 
    \left( e^{k_X+x} - 1 \right) \,
    f_X(x,T) \, 
    \ud x
    .
\end{equation}
\end{proposition}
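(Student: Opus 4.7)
The plan is to apply standard risk-neutral valuation: since $\mathbb{Q}$ is an equivalent martingale measure, the equity at time $0$ equals the discounted $\mathbb{Q}$-expectation of its terminal payoff. Using the payoff formula \eqref{V_E}, I would write
\begin{equation*}
    V_E \, = \, e^{-rT} \, \mathbb{E}^{\mathbb{Q}}\!\left[ (V_A(T) - K)^+ \right].
\end{equation*}

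Next, I would substitute the explicit solution \eqref{SDE_solution} of the SDE, conditioning on $V_A = V_A(0)$, and convert the expectation into an integral against the density $f_X(\cdot,T)$ of $X_T$:
\begin{equation*}
    V_E \, = \, e^{-rT} \int_{-\infty}^{+\infty} \left( V_A \, e^{(r+\omega_X)T + x} - K \right)^+ f_X(x,T) \, \ud x.
\end{equation*}

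Then I would determine the region where the integrand is nonzero. The inequality $V_A e^{(r+\omega_X)T + x} \geq K$ is equivalent, after taking logarithms, to $x \geq -\log(V_A/K) - (r+\omega_X)T = -k_X$, using the definition \eqref{DD} of the log forward moneyness. Restricting the integration to $x \in [-k_X, +\infty)$ removes the positive-part operator.

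Finally, I would factor a $K$ out of the integrand. Indeed, by definition of $k_X$, one has $V_A e^{(r+\omega_X)T} = K e^{k_X}$, so that
\begin{equation*}
    V_A \, e^{(r+\omega_X)T + x} - K \, = \, K \left( e^{k_X + x} - 1 \right),
\end{equation*}
which immediately yields the claimed expression \eqref{equity}. No step here is genuinely difficult: the only subtlety is that the martingale condition \eqref{martingale}, equivalently \eqref{martingale_Characteristics}, must hold so that $\omega_X$ is finite and the $\mathbb{Q}$-expectation in the valuation is well defined; the rest is a change of variable and identification of the exercise region.
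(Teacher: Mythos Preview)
Your proposal is correct and follows essentially the same approach as the paper: start from the risk-neutral valuation formula \eqref{RN_pricing_1}, substitute the exponential solution \eqref{SDE_solution}, use the identity $V_A e^{(r+\omega_X)T} = K e^{k_X}$ to factor out $K$, and integrate over the region $\{x \geq -k_X\}$ where the payoff is nonzero. The only cosmetic difference is that the paper factors out $K$ before writing the integral, whereas you do so afterwards.
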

\begin{proof}
From \eqref{V_E} and the standard theory of martingale pricing, the value of a firm's asset is
\begin{equation}\label{RN_pricing_1}
    V_E 
    \, = \,
    e^{-rT} \, \mathbb{E}^\mathbb{Q} \left[ [V_A(T) - K]^+ | V_A \right]
    .
\end{equation}
Using \eqref{SDE_solution} and the notation \eqref{DD}, we can re-write \eqref{RN_pricing_1} as
\begin{equation}\label{RN_pricing_2}
    V_E 
    \,=\,
    e^{-rT} \mathbb{E}^\mathbb{Q} \left[ [V_A e^{ ( r + \omega_X) T + X_T} - K]^+ | V_A \right]
    \, = \,
    K e^{-rT} \, \mathbb{E}^\mathbb{Q} \left[ [e^{k_X+X_T} - 1]^+ | V_A \right]
    .
\end{equation}
The last expectation can be carried out by integrating all possible realizations for the terminal payoff over the process density, resulting in formula \eqref{equity}.
\end{proof}
\begin{remark}
In the standard option pricing theory, the risk-neutral (martingale) representation formula \eqref{RN_pricing_1} holds true only if the firm's assets (materialized by the process $V_A(t)$) are tradeable, which, in general, is not the case. According to \cite{Ericsson02}, the approach is still licit because it is possible to constitute a portfolio of traded assets that possesses exactly the same dynamics than $V_A(t)$; in other words, even if $V_A(t)$ cannot be traded, it can be replicated by a tradeable portfolio, and, therefore, the models can still be based on the risk-neutral representation (which, in the case of a L\'{e}vy process, is not necessarily unique, see further discussion in section \ref{subsec:dd_radp}).
\end{remark}

\subsection{Risk-neutral distance to default and default probabilities}\label{subsec:dd_rndp}

The occurrence of a default at time $T$ corresponds to the situation where the value of the firm's asset is lower than the facial value of the debt; in this situation, the firm's equity becomes equal to 0 and, therefore, it follows from eq. \eqref{RN_pricing_2} that the probability of such an event is
\begin{align}
    \mathbb{Q}
    \left[ V_A(T) < K | V_A \right] \, & = \, \mathbb{Q} [X_T < -k_X] \noindent \\
    & = \, F_X[-k_X]
\end{align}
where $F_X(.)$ is the cumulative distribution function of the process $X_t$ evaluated at $t=T$:
\begin{equation}
    F_X(x) \, = \, \int\limits_{-\infty}^{x} \, f_X(y,T) \, \ud y
    .
\end{equation}
These observations motivate the following definitions:
\begin{definition}[Risk-neutral Default Metrics]
\label{def:metrics}
    \hspace*{0cm}
    \begin{itemize}
    \item[1.] Under the pure jump dynamics \eqref{SDE}, the risk-neutral distance to default of a firm is defined to be
    \begin{equation}
         k_X \, := \, \log\frac{V_A}{K} + (r+\omega_X)T
         ;
    \end{equation}
    \item[2.]The corresponding risk-neutral default probability of the firm is defined to be
        \begin{equation}
         P_X \, := \, F_X(-k_X)
         .
    \end{equation}
    \end{itemize}
\end{definition}

\begin{example}[Merton model]
    If $X_t = W_t$ where $W_t\sim\mathcal{N}(0,\sigma^2 t)$, then the process admits the following density (heat kernel):
    \begin{equation}
        f_W(x,t) \, = \, \frac{1}{\sigma\sqrt{2\pi t}} e^{-\frac{x^2}{2\sigma^2 t}}
        .
    \end{equation}
    The cumulative distribution function is therefore
    \begin{equation}
        F_W(x) \, = \, \int\limits_{-\infty}^x \, 
        \frac{1}{\sigma\sqrt{2\pi t}}     e^{-\frac{y^2}{2\sigma^2 t}} \, \ud y
        \, = \, N \left( \frac{x}{\sigma\sqrt{t}} \right)
        \end{equation}
    where $N(.)$ is the cumulative distribution function of the standard normal distribution $\mathcal{N}(0,1)$. It follows from remark \ref{remark:omega_BSM} and definition \ref{def:metrics} that the risk-neutral default probability is 
    \begin{align}
        F_W \left( -\left( \log\frac{V_A}{K} + (r-\frac{\sigma^2}{2})T  \right) \right)
        \, & = \, N\left( - \frac{\log\frac{V_A}{K} + (r-\frac{\sigma^2}{2})T}{\sigma\sqrt{T}} \right) \nonumber \\
        & \, := \, N(-d_2)
    \end{align}
    where $d_2$ (following the usual Black-Scholes notation) is often interpreted as the risk-neutral distance to default in the Merton model.
\end{example}

\subsection{Actual (real-world) distance to default and default probabilities}\label{subsec:dd_radp}

The metrics introduced in definition \ref{def:metrics} are risk-neutral metrics, because in equation \eqref{SDE} we have chosen to specify the dynamics for $V_A(t)$ directly under the risk-neutral, or martingale measure $\mathbb{Q}$. To be fully rigorous, let us mention that $\mathbb{Q}$ is unique only if the market is complete and that, when L\'evy processes are involved, the market is in general incomplete; this means that several equivalent martingale measures exist, and that the choice \eqref{martingale_Characteristics} is not unique - this is however the standard choice (see details e.g. in \cite{Schoutens03}). We may note that, using the notations defined in appendix \ref{app:notations}, equation \eqref{martingale_Characteristics} can be re-formulated in terms of the cumulant generating function as
\begin{equation}
    \omega_X \, = \, -\kappa_X(1)  
    ,
\end{equation}
allowing to define our choice for the change of measure via the following Radon-Nikodym derivative
\begin{equation}
    \frac{\ud \mathbb{Q}_{\mathcal{F}_t}}{\ud \mathbb{P}_{\mathcal{F}_t}}  
    \, = \, 
    \frac{e^{X_t}}{M_X(1,t)}
    \, = \,
    e^{X_t + \omega_X t}
    .
\end{equation}
Under this definition, the $\mathbb{Q}$ measure is sometimes referred to as the {\it exponentially tilted} measure, or the {\it Esscher transform} of the physical measure.

Risk-neutral default probabilities can be used as a convenient proxy for the estimation of actual (also called physical, or real-world) default probabilities, as they essentially convey the same information and sensitivities to the variables such as debt or model parameters (see a discussion e.g. in \cite{Delianedis03}); moreover, they are well-suited to derivatives-based calibrations (such as CDS quotes), and it is common market practice that xVA calculations are commonly made under the $\mathbb{Q}$ measure \citep{Gregory15}. However, the risk-free rate $r$ in the dynamics \eqref{SDE} generally differs from the actual drift $\overline{r}$ at which the assets grow\footnote{The actual drift is often denoted by $\mu$ in the literature; we do not use this notation, in order to avoid confusion with the mean parameter for the inverse Gaussian subordinator and the standardized central moments.}, and computing actual default probabilities can be necessary if one wishes to perform precise default computations, in particular in the framework of portfolio exposure (see for instance a discussion in \cite{Stein16}); in this context, instead of specifying the dynamics \ref{SDE} under the $\mathbb{Q}$-measure, we can choose to specify it under the real-world measure $\mathbb{P}$ and, doing so, to replace the risk-free rate $r$ is by the actual drift $\overline{r}$; the analogue to the risk-neutral metrics \ref{def:metrics} are the actual (or real-world) risk metrics
\begin{equation}\label{def:actual_metrics}
    \overline{k}_X \:= \, \log\frac{V_A}{K} + (\overline{r}+\omega_X) T \, ,
    \quad
    \overline{P}_X \, := \, F_X(-\overline{k}_X)
\end{equation}
from which it follows that risk-neutral and actual default probabilities coincide only if $r=\overline{r}$ (that is, if the underlying assets grow at exactly the risk-free rate). In subsection \ref{subsec:calibration}, we will explain how both model parameters and actual drift can be calibrated using historical data and a moment-matching algorithm.

\subsection{Historical calibration algorithm}\label{subsec:calibration}

\subsubsection{Model parameters}

Given a period of $N$ consecutive trading days $t_1,t_2,\dots,t_N$, with, for simplicity, constant spacing $\Delta:=t_{n}-t_{n-1}$ for all $n\in 1\dots  N$, we constitute a time-series of $N-1$ realizations of the log-returns of $V_A$; using \eqref{SDE_solution}, it can be written as
\begin{equation}\label{T_x_risk_neutral}
    \mathcal{T}_X \, := \, 
    \left\{
        \log\frac{V_A(t_{n+1})}{V_A(t_n)} \, , \, n=1 , \dots, N-1  
    \right\}
    \, = \, 
    \left\{
    (r + \omega_X) \Delta + X_\Delta \, \, ,  n=1 \dots N-1
    \right\}
    .
\end{equation}
In our applications, we will choose a constant spacing parameter $\Delta=1$, i.e., daily data, but other choices of data frequency are of course possible. We denote by $\mu_X^+$ the set of moments of order greater than 2 of $\T_X$ (see definitions \eqref{def_moments} and \eqref{def_moments_set} in appendix \ref{app:notations}). We assume that the model has $N_X$ parameters (or equivalently, that the process $X_t$ has $N_X$ degrees of freedom) grouped in a set denoted by $a_X$, and we choose a subset $\mu_{N_X} \subset \mu_X^+$ of $N_X$ moments of $\T_X$ of order greater than 2; we assume that the moments of $\T_X$ are known in terms of the model parameters, that is, that we can write $\mu_{N_X} = \M_X(a_X)$ for some bijective function $\M_X$. Last, we assume that, using proposition \ref{prop:equity}, we were able to derive a closed pricing formula for the equity price, that is, that we have found a locally invertible function $P_{a_X}$ such that
\begin{equation}
    V_E(t) \, = \, P_{a_X}(V_A(t))
    \quad
    \forall t\in[0,T]
.
\end{equation}
Note that we focus on the dependence on the model parameters $a_X$ but of course the $P_{a_X}$ function can also depend on other external parameters such as maturity $T$ or debt level $K$ for instance. The purpose of algorithm \ref{algorithm_calibration} is to determine $a_X$ (seen as a vector) as the limit of a sequence of vectors $a_X^{(k)}$ calibrated from observable market data, i.e., formally,
\begin{equation}
    a_X
    \, = \, 
    \lim_{k \rightarrow \infty} a_{X}^{(k)}
    .
\end{equation}

\begin{algorithm}[Calibration]
\label{algorithm_calibration}
    \hspace*{0cm}
    \begin{itemize}
        \item[(1)] Choose $N_X$ real numbers, group them under the form of a vector $a_X^{(0)}$ and, for each date $t_n$, define
        \begin{equation}
            V_A^{(0)}(t_n) 
            \, = \, 
            P_{a_X^{(0)}}^{-1} \left( V_E(t_n) \right)
            .
        \end{equation}
        \item[(2)] Denote $\T_X^{(0)} := \left\{ \log\frac{V_A^{(0)}(t_{n+1})}{V_A^{(0)}(t_{n})} , n=1 \dots N-1     \right\}$ and $\mu_{N_X}^{(0)}$ its associated subset of  $N_X$ moments of order greater than 2, and define
        \begin{equation}
            a_X^{(1)}
            \, = \, 
            \M_X^{-1} \left( \mu_{N_X}^{(0)} \right)
            .
        \end{equation}
        \item[(3)] Repeat step (1) and (2): introduce
        \begin{equation}
            V_A^{(1)}(t_n) 
            \, = \, 
            P_{a_X^{(1)}}^{-1} \left( V_E(t_n) \right)
            , 
            \quad
            \T_X^{(1)} := \left\{ \log\frac{V_A^{(1)}(t_{n+1})}{V_A^{(1)}(t_{n})} , n=1 \dots N-1  \right\}
        \end{equation}
        and $\mu_{N_X}^{(1)}$ the subset of $N_X$ moments of order greater than 2 of $\T_X^{(1)}$, and define
        \begin{equation}
            a_X^{(2)}
            \, = \, 
            \M_X^{-1} \left( \mu_{N_X}^{(1)} \right)
            .
        \end{equation}
        \item[(4)] Stop the algorithm once a desired level of precision is attained, e.g. $\vert\vert a_X^{(k)} - a_X^{(k-1)} \vert\vert_{\infty} < 10^{-3}$.
\end{itemize}
\end{algorithm}

\subsubsection{Actual drift}\label{subsec:actual_drift}
When it comes to estimating actual (real world) default probabilities, algorithm \ref{algorithm_calibration} remains valid but we have, furthermore, to estimate the actual drift $\overline{r}$. We note that, under the $\mathbb{P}$ measure, the series of log returns \eqref{T_x_risk_neutral} becomes
\begin{equation}\label{T_x_actual}
    \overline{\mathcal{T}}_X \, := \, 
    \left\{
        \log\frac{V_A(t_{n+1})}{V_A(t_n)} \, , \, n=1 , \dots, N-1  
    \right\}
    \, = \, 
    \left\{
    (\overline{r} + \omega_X) \Delta + X_\Delta \, \, ,  n=1 \dots N-1
    \right\}
\end{equation}
and, therefore, the associated first order moment (the mean) reads
\begin{equation}
    \overline{\mu}_X^{(1)}
    \, = \, 
    \mathbb{E} \left[ \overline{\T}_X \ \right]
    \, = \, 
    (\overline{r} + \omega_X )\Delta + \mathbb{E} \left[ X_\Delta \right]
\end{equation}
where the convexity adjustment $\omega_X$ depends on the model parameters that have been calibrated in algorithm \ref{algorithm_calibration}. It follows immediately that
\begin{equation}\label{actual_drift}
    \overline{r} 
    \, = \, 
    \frac{\overline{\mu}_X^{(1)} - \mathbb{E} \left[ X_\Delta \right]}{\Delta} - \omega_X
    .
\end{equation}
\begin{remark}
    In the Merton case, choosing $\Delta=1$, we have $\mathbb{E}[W_1]=0$ and $\omega_W = -\frac{\sigma^2}{2}$ and therefore
    \begin{equation}
        \overline{r} \, = \, \overline{\mu}_W^{(1)} + \frac{\sigma^2}{2}
        .
    \end{equation}
\end{remark}

\section{One-sided models}\label{sec:one_sided}

In this section, we will be particularly interested in the case where the L\'evy process $X_t$ in \eqref{SDE} is spectrally negative, that is, features negative (downward) jumps only; this will lead us to introduce two spectrally negative pure jump processes, namely the NegGamma and NegIG processes. Let us note that one-sided processes have already been applied to default modelling, e.g. in \cite{Madan08}, but from a point of view of negative subordinators, and with a first passage time approach and a CDS-based calibration. Our approach is different: we will first establish closed formulas for the equity value of the firms under these dynamics and show how they can be used for historical issuer-based calibrations and estimation of default probabilities.

\subsection{Negative Gamma (NegGamma) credit risk model}\label{subsec:Neg_Gamma}
We define the NegGamma process to be the pure jump process whose L\'evy measure reads
\begin{equation}
    \Pi_G (\ud x) \, = \, \rho \frac{e^{-\lambda |x|}}{|x|} \, \mathbbm{1}_{\{ x< 0\}} \, \ud x
\end{equation}
for some positive real numbers $\lambda$ and $\rho$; within this parametrization $\lambda$ is called the rate parameter, and $\rho$ the shape parameter. It follows from the L\'evy-Khintchine representation \eqref{Levy_Khintchine} that the L\'evy symbol of the NegGamma process is, with an adapted choice of truncation function,
\begin{equation}\label{psi_Gamma}
    \psi_{G}(u)
    \,  = \, \int\limits_{-\infty}^{+\infty} \, (e^{iux} - 1) \, \Pi_G(\ud x)
    \, = \,
    - \rho \log\left( 1 + i\frac{u}{\lambda}  \right).
\end{equation}
The cumulant generating function is therefore
\begin{equation}
    \kappa_G(p) \, = \, \psi_G(-ip) \, = \,  - \rho \log\left( 1 + \frac{p}{\lambda}  \right)
\end{equation}
from which we easily deduce the cumulants $\kappa_G^{(n)} \, = \, (-1)^n (n-1)! \, \rho / \lambda^n$ (see further details e.g. in \cite{Kuchler08}). In particular, we obtain, for the standardized moments of the log returns ($\Delta=1$): 
\begin{equation}\label{negGamma_cumulants}
    \left\{
    \begin{aligned}
        &  \overline{\mu}_G^{(1)} \,  = \, \overline{r} + \rho\log\left( 1+ \frac{1}{\lambda} \right) -\frac{\rho}{\lambda} \quad \mathrm{(Mean)}  \\
        & \mu_G^{(2)} \,  = \, \frac{\rho}{\lambda^2} \, \mathrm{(Variance)} \\
        & \mu_G^{(3)} \,  = \, -\frac{2}{\sqrt{\rho}} \, \mathrm{(Skewness)} \\
        & \mu_G^{(4)} - 3  = \, \frac{6}{\rho} \, \mathrm{(Excess \, kurtosis)}
        .
    \end{aligned}
    \right.
\end{equation}

\subsubsection{Default metrics}\label{subsubsec:Gamma_Default}

It follows from \eqref{psi_Gamma} that the martingale adjustment for the NegGamma process is 
\begin{equation}\label{omega_gamma}
    \omega_G \, = \, \rho\log\left( 1+ \frac{1}{\lambda} \right)
\end{equation}
and therefore the distance to default is 
\begin{equation}
    k_G \, = \, \log\frac{V_A}{K} + \left(r + \rho\log\left( 1+ \frac{1}{\lambda} \right)\right) T
    .
\end{equation}
The characteristic function also follows immediately from \eqref{psi_Gamma}:
\begin{equation}
    \Psi_G(u,t) \, = \, \left( 1 + i\frac{u}{\lambda}  \right)^{-\rho t}
\end{equation}
and, inverting this Fourier transform, we get the density function
\begin{equation}\label{density_Gamma}
    f_{G}(x,t) \, = \, \frac{\lambda}{\Gamma(\rho t)} \, |\lambda x|^{\rho t -1}e^{-\lambda|x|} \, \mathbbm{1}_{ \{ x<0 \} }
    ,
\end{equation}
from which we obtain the cumulative distribution function:
\begin{equation}
    F_G(x,T) \, = \, 
    \left\{
    \begin{aligned}
         & \frac{\Gamma(\rho T,-\lambda x)}{\Gamma(\rho T)} \hspace*{0.5cm} \mathrm{if} \hspace*{0.2cm} x<0 \\
         & 1 \hspace*{0.5cm} \mathrm{if} \hspace*{0.2cm} x\geq 0
    \end{aligned}
    \right.
\end{equation}
where $\Gamma(a,z)$ stands for the upper incomplete Gamma function (see appendix \ref{app:notations} and \cite{Abramowitz72}). As a consequence of definition \ref{def:metrics}, the default probability of a firm thus writes
\begin{equation}
    F_G(-k_G,T) \, = \, 
    \left\{
    \begin{aligned}
         & \frac{\Gamma(\rho T,\lambda k_G)}{\Gamma(\rho T)} \hspace*{0.5cm} \mathrm{if}\hspace*{0.2cm} k_G>0 \\
         & 1 \hspace*{0.5cm} \mathrm{if} \hspace*{0.2cm} k_G\leq 0
         .
    \end{aligned}
    \right.
\end{equation}
As a consequence of the definition of the upper incomplete Gamma function, it is clear that $F_G(-k_G,T)\rightarrow 1$ when $k_G\rightarrow 0$, and that there is a 100\% default probability as soon as the distance to default is null or negative. In other words, there is no more chance for $V_A(T)$ to be greater than $K$ as soon as $k_G\leq 0$, which is a consequence of the occurrence of downward jumps only.

\subsubsection{Equity value}\label{subsubsec:Gamma_equity}

Let $\gamma(a,z)$ denote the lower incomplete gamma function, such that the equality $\gamma(a,z)+\Gamma(a,z)=\Gamma(a)$ holds for all $z\geq 0$.

\begin{proposition}
\label{prop:GammaNeg_equity_value}
The equity value of a firm in the NegGamma credit risk model can be written as:
\begin{equation}\label{GammaNeg_equity_value}
    V_E \, = \, 
    \left\{
    \begin{aligned}
    & V_A \, \frac{\gamma(\rho T, (\lambda+1)k_G)}{\Gamma(\rho T)}
    \, - \, Ke^{-rT} \, \frac{\gamma(\rho T, \lambda k_G)}{\Gamma(\rho T)} \hspace*{0.5cm} \mathrm{if } \,\, k_G>0
    ,
    \\
    & 0 \hspace*{0.5cm} \mathrm{if} \,\, k_G\leq 0
    .
    \end{aligned}
    \right.
\end{equation}
\end{proposition}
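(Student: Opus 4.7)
The plan is to plug the explicit NegGamma density \eqref{density_Gamma} into the general equity representation \eqref{equity} of Proposition \ref{prop:equity}, and then reduce the resulting integrals to the lower incomplete gamma function via two elementary substitutions. The martingale adjustment \eqref{omega_gamma} will then convert the appropriate prefactor into $V_A$.

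More concretely, I would first observe that $f_G(x,T)$ is supported on $(-\infty,0)$. Hence if $k_G \leq 0$, the integration domain $[-k_G,\infty)$ in \eqref{equity} does not intersect the support, so $V_E=0$, which settles the second branch. For $k_G>0$, the effective domain is $[-k_G,0]$ and, performing the change of variable $u=-x$, we can rewrite
\begin{equation*}
    V_E \, = \, \frac{Ke^{-rT}\lambda^{\rho T}}{\Gamma(\rho T)}\left[e^{k_G}\int_{0}^{k_G}u^{\rho T-1}e^{-(\lambda+1)u}\,\ud u \, - \, \int_{0}^{k_G}u^{\rho T-1}e^{-\lambda u}\,\ud u\right].
\end{equation*}
The two integrals are then reduced to lower incomplete gamma functions by the rescalings $v=(\lambda+1)u$ and $v=\lambda u$ respectively, yielding factors $\gamma(\rho T,(\lambda+1)k_G)/(\lambda+1)^{\rho T}$ and $\gamma(\rho T,\lambda k_G)/\lambda^{\rho T}$.

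The last step, which is really the only delicate one, is to identify the prefactor $Ke^{-rT}e^{k_G}\lambda^{\rho T}/(\lambda+1)^{\rho T}$ of the first integral with $V_A$. Using the definition of $k_G$ and the explicit value of the convexity adjustment $\omega_G=\rho\log(1+1/\lambda)$ from \eqref{omega_gamma}, we have $e^{k_G}=(V_A/K)\,e^{rT}\bigl((\lambda+1)/\lambda\bigr)^{\rho T}$, so that this prefactor collapses exactly to $V_A$. Substituting in the previous expression produces the claimed formula \eqref{GammaNeg_equity_value}.

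The only real obstacle is bookkeeping: tracking the powers of $\lambda$ and $\lambda+1$ through the two changes of variable, and verifying that the martingale condition \eqref{martingale_Characteristics} is precisely what allows $e^{k_G}$ to absorb the factor $((\lambda+1)/\lambda)^{\rho T}$. Everything else is a direct computation and relies only on the closed-form density \eqref{density_Gamma} and the standard integral representation of $\gamma(a,z)$.
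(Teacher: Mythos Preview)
Your proposal is correct and follows essentially the same route as the paper: substitute the NegGamma density into the general equity integral \eqref{equity}, change variable $x\mapsto -x$ to obtain the two integrals over $[0,k_G]$, reduce each to a lower incomplete gamma function, and then use $Ke^{-rT}e^{k_G}=V_A e^{\omega_G T}$ together with $e^{\omega_G T}=((\lambda+1)/\lambda)^{\rho T}$ to identify the first prefactor with $V_A$; the case $k_G\leq 0$ is handled identically by the support argument.
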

\begin{proof}
If $k_G >0$, then using proposition \ref{prop:equity} with the density \eqref{density_Gamma} and changing the variable $x\rightarrow -x$, we can write the equity value as:
\begin{equation}
    V_E \, = \, 
    \frac{\lambda^{\rho T}}{\Gamma(\rho T)} Ke^{-rT}
    \left[
    e^{k_G} \int\limits_0^{k_G} e^{-(\lambda + 1) x} x^{\rho T - 1} \ud x
    - 
    \int\limits_0^{k_G} e^{-\lambda x} x^{\rho T - 1} \ud x
    \right]
    .
\end{equation}
Using the definition of the lower incomplete gamma function (see \eqref{lower_gamma} in appendix \ref{app:notations}) and recalling that $Ke^{-rT}e^{k_G}=V_A e^{\omega_G T}$, we have:
\begin{equation}
   V_E \, = \, V_A e^{\omega_G T} \left( \frac{\lambda}{\lambda+1} \right)^{\rho T} 
    \frac{\gamma(\rho T, (\lambda+1)k_G)}{\Gamma(\rho T)}
    - Ke^{-rT} \frac{\gamma(\rho T, \lambda k_G)}{\Gamma(\rho T)}
\end{equation}
which simplifies into the first equation of \eqref{GammaNeg_equity_value} because $e^{\omega_G T}=((\lambda+1)/\lambda)^{\rho T}$.

\noindent If $k_G \leq 0$, then $(-k_G , \infty) \subset \mathbb{R}_+$ and therefore the integral \eqref{equity} is equal to zero as the density \eqref{density_Gamma} is supported by the real negative axis only.
\end{proof}

\subsubsection{Parameter calibration}\label{subsubsec:Gamma_calibration}

There are 2 parameters to calibrate ($N_G=2)$, $\lambda$ and $\rho$, and therefore it suffices to use two central moments in algorithm \ref{algorithm_calibration}, for instance the variance $\mu_G^{(2)}$ and excess kurtosis $\mu_G^{(4)}-3$. Given the formulas \eqref{negGamma_cumulants}, we introduce the function
\begin{equation}\label{M_G}
    \mathcal{M}_G \, : 
    \begin{bmatrix}
        \lambda \\
        \rho
    \end{bmatrix}
    \, \longrightarrow \, 
    \begin{bmatrix}
        \rho / \lambda^2 \\
         6 / \rho
    \end{bmatrix}
    .
\end{equation}
Let $\{ V_E\}$ denote the time series of observable prices during a one year period preceeding the calculation date, and let us initiate we initiate algorithm \ref{algorithm_calibration} by choosing $a_G^{(0)}:= (\mathrm{Variance}(\{V_E\},\mathrm{ex.Kurtosis}(\{V_E\}))$. Following algorithm \ref{algorithm_calibration}, we have to evaluate $P_{a_G^{(0)}}^{-1}$ which can be performed via a classical Newton-Raphson algorithm for the pricing formula in proposition \ref{prop:GammaNeg_equity_value}, while the function $\M_G^{-1}$ is  straightforward to obtain from definition \eqref{M_G}. Calibration results are displayed in table \ref{tab:negGamma_calibration} for a data set constituted of investment grade and speculative issuers with various rankings (see details in table \ref{tab:data_set} in appendix \ref{app:data_set}); speed of convergence (typical number of steps needed for algorithm \ref{algorithm_calibration} to converge) is discussed in table \ref{tab:conv_analysis}, and stability of the parameters is discussed in appendix \ref{app:time_stability}.



\begin{table}[ht]
 \caption{Parameter calibration for several issuers in the NegGamma credit risk model: asset value $V_A$ (in MM GBP or EUR), rate $\lambda$ and shape $\rho$. We choose a maturity $T$=1 year and a risk free interest rate $r=0\%$. Historical calibration performed on a 1 year interval (28 October 2019 to 13 October 2020, corresponding to 252 observations).}
 \label{tab:negGamma_calibration}       
 \centering
 \begin{tabular}{lc|ccc|cc}
 \hline
  \multicolumn{2}{c}{} & \multicolumn{3}{c}{NegGamma model} & \multicolumn{2}{c}{Merton model} \\ 
  \hline
 & $K$ (in MM) & $V_A$ (in MM) & $\lambda$ & $\rho$ & $V_A$ (in MM) & $\sigma$ \\
 SAP GY & 16 196 & 180 913 & 3.280 & 0.888 & 180 914 & 28.73\% \\
 MRK GY & 14 180 & 70 763 & 3.224 & 0.645 & 70 766 & 24.87\%  \\ 
 AI FP & 14 730 & 78 928 & 3.194 & 0.559 & 78 931 & 23.40\% \\
 SU FP & 8 473 & 71 471 & 2.200 & 0.510 & 71 474 & 32.45\% \\
 CRH LN & 10 525 & 33 935 & 2.700 & 0.684 & 33 965 & 30.38\% \\
 DAI GY & 161 780 & 213 453 & 6.736 & 0.530 & 214 039 & 10.78\% \\
 VIE FP & 16 996  & 27 243 & 4.102 & 0.452 & 27 319 & 16.14\% \\
 SRG IM & 14 774 & 29 527 & 2.834 & 0.310 & 29 585 & 19.38\% \\
 AMP IM & 1 339 & 8 627 & 1.784 & 0.414 & 8 629 & 35.95\% \\
 FR FP  & 4 879 & 11 379  & 2.746 & 0.774 & 11 415 & 31.98\% \\
 EO FP & 4 838 & 9 993 & 3.786 & 1.129 & 10 023 & 27.75\% \\
 GET FP & 4 498 & 11 658 & 3.230 & 0.612 & 11 675 & 23.98\% \\
 LHA GY & 10 106 & 14 635 & 4.074 & 0.784 & 14 730 & 21.61\% \\
 PIA IM & 609 & 1 491 & 4.138 & 1.050 & 1 492 & 24.54\% \\
 CO FP & 14 308 & 16 445 & 11.896 & 0.745 & 16 494 & 7.11\% \\
  \hline
\end{tabular}
\end{table}



\subsection{Negative Inverse Gaussian (NegIG) credit risk model}\label{subsec:Neg_IG}

We define the NegIG process to be the pure jump process whose L\'evy measure reads
\begin{equation}
    \Pi_I (\ud x) \, = \, 
    \sqrt{\frac{\lambda}{2\pi}}
    \frac{e^{-\frac{\lambda}{2\mu^2}|x|}}{|x|^{\frac{3}{2}}}
    \,
    \mathbbm{1}_{\{x < 0\}} 
    \, \ud x
\end{equation}
for some positive real numbers $\lambda$ (shape) and $\mu$ (mean); from the L\'evy-Khintchine representation \eqref{Levy_Khintchine}, we deduce the L\'evy symbol for the NegIG process:
\begin{equation}\label{psi_IG}
    \psi_{I}(u)
    \,  = \, \int\limits_{-\infty}^{+\infty} \, (e^{iux} - 1) \, \Pi_I(\ud x)
    \, = \,
    \frac{\lambda}{\mu}
    \left(
    1-\sqrt{1+ 2 i u \frac{\mu^2}{\lambda}}
    \right)
    .
\end{equation}
The cumulant generating function is therefore
\begin{equation}
    \kappa_I(p) \, = \, \psi_I(-ip) \, = \,  
    \frac{\lambda}{\mu}
    \left(
    1-\sqrt{1+ 2 p \frac{\mu^2}{\lambda}}
    \right)
\end{equation}
from which we easily deduce the cumulants $\kappa_I^{(n)}$  and the corresponding central moments:
\begin{equation}\label{cumulants_negIG}
    \left\{
    \begin{aligned}
        &  \overline{\mu}_I^{(1)} \,  = \, \overline{r} +  \frac{\lambda}{\mu} 
        \left( \sqrt{1+2\frac{\mu^2}{\lambda}} - 1  
        \right) - \mu \quad \mathrm{(Mean)}  \\
        & \mu_I^{(2)} \,  = \, \frac{\mu^3}{\lambda} \, \mathrm{(Variance)} \\
        & \mu_I^{(3)}  \,  = \, -3 \sqrt{\frac{\mu}{\lambda}} \, \mathrm{(Skewness)} \\
        & \mu_I^{(4)} - 3 \,  = \, \frac{15 \mu}{\lambda} \, \mathrm{(Excess \, kurtosis)}
        .
    \end{aligned}
    \right.
\end{equation}

\subsubsection{Default metrics}\label{subsubsec:IG_Default}

It follows from \eqref{psi_IG} that the martingale adjustment for the NegIG process is 
\begin{equation}\label{omega_IG}
    \omega_I \, = \, \frac{\lambda}{\mu} 
    \left( \sqrt{1+2\frac{\mu^2}{\lambda}} - 1  
    \right)
\end{equation}
and therefore the distance to default is 
\begin{equation}
    k_I \, = \, \log\frac{V_A}{K} + \left(r + \frac{\lambda}{\mu} \left( \sqrt{1+2\frac{\mu^2}{\lambda}} - 1  
    \right)\right) T
    .
\end{equation}
The characteristic function also follows immediately from \eqref{psi_IG}:
\begin{equation}
    \Psi_I(u,t) \, = \, e^{ \frac{\lambda}{\mu}
    \left(
    1-\sqrt{1+ 2 i u \frac{\mu^2}{\lambda}}
    \right)
    t
    }
\end{equation}
and, inverting this Fourier transform, we get the density function
\begin{equation}\label{density_IG}
    f_{I}(x,t) \, = \,
    \sqrt{\frac{\lambda t^2}{2\pi}}
    \frac{e^{ -\lambda\frac{(|x|-\mu t)^2}{2\mu^2|x|}}}{|x|^{\frac{3}{2}}} 
    \, \mathbbm{1}_{ \{ x<0 \} }
    .
\end{equation}
From definition \ref{def:metrics} and Shuster's integrals (see \eqref{Shuster} in appendix \ref{app:notations}), we therefore obtain the default probability in the negIG model:
\begin{equation}
    F_I(-k_I,T) \, = \, 
    \left\{
    \begin{aligned}
         & N \left( - \sqrt{\frac{\lambda T^2}{k_I}} \left( \frac{k_I}{\mu T} -1 \right) \right)
        \, - \, 
        e^{2\frac{\lambda T}{\mu}} \, 
        N \left( - \sqrt{\frac{\lambda T^2}{k_I}} \left( \frac{k_I}{\mu T} + 1 \right) \right) \hspace*{0.5cm} \mathrm{if}\hspace*{0.2cm} k_I>0 \\
         & 1 \hspace*{0.5cm} \mathrm{if} \hspace*{0.2cm} k_I\leq 0
         .
    \end{aligned}
    \right.
\end{equation}
where N(.) stands for the standard normal cumulative distribution function (see apprendix \ref{app:notations}).

\subsubsection{Equity value}\label{subsubsec:IG_equity}
Let us define the function
\begin{equation}
    \varphi (x, t, \lambda, \mu) \, := \, 
    N \left( \sqrt{\frac{\lambda t^2}{x}} \left( \frac{x}{\mu t} -1 \right) \right)
    \, + \, 
    e^{2\frac{\lambda t}{\mu}} \, 
    N \left( - \sqrt{\frac{\lambda t^2}{x}} \left( \frac{x}{\mu t} + 1 \right) \right)
    .
\end{equation}

\begin{proposition}
\label{prop:IGNeg_equity_value}
The equity value of a firm in the NegIG credit risk model can be written as:
\begin{equation}\label{IGNeg_equity_value}
    V_E \, = \, 
    \left\{
    \begin{aligned}
    & V_A \,
    \varphi \left(  k_I \sqrt{1+2\frac{\mu^2}{\lambda}} , T, \lambda \sqrt{1+2\frac{\mu^2}{\lambda}} , \mu \right)
    \, - \, 
    K e^{-rT} 
    \varphi \left( k_I , T, \lambda , \mu \right)
    \hspace*{0.5cm} \mathrm{if } \,\, k_I>0
    ,
    \\
    & 0 \hspace*{0.5cm} \mathrm{if} \,\, k_I\leq 0
    .
    \end{aligned}
    \right.
\end{equation}
\end{proposition}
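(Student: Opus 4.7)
\bigskip
\noindent\textbf{Proof proposal.} The plan is to follow exactly the same route as in Proposition \ref{prop:GammaNeg_equity_value}, now with the inverse Gaussian density \eqref{density_IG} in place of the Gamma one. The case $k_I \leq 0$ is immediate: since $f_I$ is supported on the negative half line, the integration domain $(-k_I,\infty)$ appearing in \eqref{equity} lies entirely outside the support of the density, so $V_E$ vanishes.

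For $k_I > 0$, I would change the variable $x \to -x$ in \eqref{equity} to fold the integration onto $(0,k_I)$ and split the two terms coming from $e^{k_I+x}-1$, which gives
\[
    V_E \, = \, Ke^{-rT}e^{k_I} \!\int_0^{k_I} \!\! e^{-x} g(x) \, \ud x
    \, - \, Ke^{-rT} \!\int_0^{k_I} g(x) \, \ud x,
\]
where $g(x) := \sqrt{\lambda T^2/(2\pi)} \, x^{-3/2} \exp(-\lambda(x-\mu T)^2/(2\mu^2 x))$ is the standard inverse Gaussian density with shape $\lambda$ and mean $\mu$. The second integral is then the corresponding cumulative distribution function evaluated at $k_I$; applying Shuster's formula \eqref{Shuster} identifies it directly as $\varphi(k_I,T,\lambda,\mu)$, producing the second term of \eqref{IGNeg_equity_value}.

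The crux is to recognize the first integral as another (rescaled) inverse Gaussian CDF. Setting $c := \sqrt{1+2\mu^2/\lambda}$, I would absorb the tilt $e^{-x}$ into the Gaussian exponent by completing the square, namely by verifying the identity
\[
    -x \, - \, \frac{\lambda(x-\mu T)^2}{2\mu^2 x}
    \, = \,
    -\omega_I T \, - \, \lambda c \, \frac{(cx-\mu T)^2}{2\mu^2 (cx)},
\]
which follows from $\lambda c^2 = \lambda + 2\mu^2$ and from the definition \eqref{omega_IG} of $\omega_I$. The substitution $y = cx$ then turns the first integral into $e^{-\omega_I T}$ times the inverse Gaussian CDF with parameters $(\lambda c, \mu)$ evaluated at $ck_I$, that is, $e^{-\omega_I T}\,\varphi(ck_I,T,\lambda c,\mu)$. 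Finally, the relation $Ke^{-rT}e^{k_I} = V_A e^{\omega_I T}$ (a direct consequence of the definition of $k_I$) cancels the $e^{\pm \omega_I T}$ factors and delivers the $V_A$ coefficient appearing in \eqref{IGNeg_equity_value}.

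The only genuine difficulty will be the algebraic identity in the displayed equation above: checking that the rescaling $\lambda \to \lambda c$ (with $\mu$ left unchanged) combined with the change of variable $x \to cx$ is precisely the one that absorbs the factor $e^{-x}$ while still yielding a bona fide inverse Gaussian density on $(0,ck_I)$, with the leftover constant being exactly the martingale adjustment. Everything else is a direct substitution or an appeal to the Shuster identities already recorded in appendix \ref{app:notations}.
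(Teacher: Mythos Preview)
Your proposal is correct and follows essentially the same route as the paper's own proof: the paper likewise handles $k_I\le 0$ by a support argument, changes $x\to -x$ to split the integral into two pieces, evaluates the untilted piece directly via Shuster's formula, and treats the tilted piece by the same completing-the-square identity (with $c=\sqrt{1+2\mu^2/\lambda}$) together with $Ke^{-rT}e^{k_I}=V_A e^{\omega_I T}$. Your explicit substitution $y=cx$ just makes visible the step the paper leaves to the reader when it says the first integral ``can be carried out in a similar way''.
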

\begin{proof}
    If $k_I>0$, then using proposition \ref{prop:equity} with the density \eqref{density_IG}, and changing the variable $x\rightarrow -x$, we can write the equity value as:
    \begin{equation}
        V_E \, = \, 
        \sqrt{\frac{\lambda T^2}{2\pi}} 
        Ke^{-rT}
        \left[ 
        e^{k_I} \int\limits_0^{k_I}
        \frac{e^{-x-\lambda\frac{(x-\mu T)^2}{2\mu^2x}}}{x^{3/2}} \, \ud x
        \, - \, 
        \int\limits_0^{k_I}
        \frac{e^{-\lambda\frac{(x-\mu T)^2}{2\mu^2x}}}{x^{3/2}} \, \ud x
        \right]
    \end{equation}
    The second integral can be performed directly from formula \eqref{Shuster} and the definition of the $\varphi(.)$ function; the first integral can be carried out in a similar way, after completing the square
    \begin{equation}
        -x-\lambda\frac{(x-\mu T)^2}{2\mu^2 x } \, = \, 
        - \frac{\lambda 
        \left( \sqrt{1+2 \mu^2 / \lambda }x - \mu T \right)^2}{2\mu^2 x}
        +
        \frac{\lambda T}{\mu}
        \left( 1 - \sqrt{ 1+2\frac{\mu^2}{\lambda} } \right)
    \end{equation}
    and recalling that 
    \begin{equation}
        Ke^{-rT}e^{k_I} 
        \, = \, 
        V_A e^{\omega_I T}
        \, = \, 
        V_A e^{\frac{\lambda T}{\mu} 
        \left( \sqrt{1+2\mu^2 / \lambda} - 1  
        \right)}
        .
    \end{equation}
Last, if $k_I\leq 0$, then $(-k_I , \infty) \subset \mathbb{R_+}$ and therefore the integral \eqref{equity} is equal to zero as the density \eqref{density_IG} is supported by the real negative axis only.
\end{proof}

\subsubsection{Parameter calibration}\label{subsubsec:IG_calibration}
Calibration process for the NegIG parameters $\lambda$ and $\mu$ goes like the NegGamma calibration in sub-subsection \ref{subsubsec:Gamma_calibration}, by using variance and excess kurtosis, and by introducing the function
\begin{equation}\label{M_I}
    \M_I \, : 
    \begin{bmatrix}
        \lambda \\
        \mu
    \end{bmatrix}
    \, \longrightarrow \, 
    \begin{bmatrix}
        \mu^3 / \lambda \\
        15 \mu / \lambda
    \end{bmatrix}
\end{equation}
whose inverse $\M_I^{-1}$ is straightforward to compute. 



\subsection{Comparison of default probabilities and discussion}\label{subsec:onesided_default_probabilities}

\subsubsection{Risk-neutral default probabilities}

In table \ref{tab:negGamma_negNIG_Merton}, we compare the risk-neutral default probabilities obtained for the set of issuers (see details in appendix \ref{app:data_set}) with our NegGamma and NegIG models, to the one obtained via the classical Merton model, for two distinct horizons $T$ (1 year and 5 years).

\begin{table}[ht]
 \caption{Comparison of NegGamma, NegIG and Merton risk-neutral default probabilities, for time horizons T=1 and 5 years and for several categories/ratings.}
 \label{tab:negGamma_negNIG_Merton}       
 \centering
 \begin{tabular}{lcc|cc|cc|cc}
 \hline
  & Rating & Category &  \multicolumn{2}{c|}{NegGamma} & 
  \multicolumn{2}{c|}{NegIG} & 
  \multicolumn{2}{c}{Merton}
  \\ 
  \hline
  & &  & $T$= 1 y. & $T$=5 y. & $T$= 1 y. & $T$=5 y. &  $T$=1 y. & $T$=5 y. \\
  SAP GY & A & Inv & 0.01\% & 0.47\% & 0.01\% & 0.46\% & 0.00\% & 0.03\% \\
  MRK GY & A & Inv & 0.12\% & 1.90\% & 0.12\% & 1.84\% & 0.00\% & 0.47\% \\
  AI FP & A- & Inv & 0.08\% & 1.26\% & 0.08\% & 1.22\% & 0.00\% & 0.16\% \\
  SU FP & A- & Inv & 0.14\% & 2.02\% & 0.14\% & 1.95\% & 0.00\% & 0.51\% \\
  CRH LN & BBB+ & Inv & 1.10\% & 10.11\% & 1.02\% & 9.97\% & 0.01\% & 10.85\% \\
  SRG IM & BBB+ & Inv & 1.73\% & 10.53\% & 1.56\% & 10.27\% & 0.02\% & 11.04\% \\
  DAI GY & BBB+ & Inv & 3.26\% & 20.94\% & 3.00\% & 21.06\% & 0.55\% & 34.13\% \\
  VIE FP & BBB & Inv & 2.62\% & 14.96\% & 2.39\% & 14.77\% & 0.21\% & 17.14\% \\
  \hline
  AMP IM & BB+ & Spec & 0.50\% & 4.82\% & 0.47\% & 4.66\% & 0.00\% &  3.16\% \\
  FR FP & BB+ & Spec & 3.12\% & 21.63\% & 2.90\% & 21.84\% & 0.62\% & 33.48\% \\
  EO FP & BB & Spec & 3.06\% & 22.02\% & 2.87\% & 22.13\% & 0.65\% & 29.00\%  \\
  GET FP & BB- & Spec & 1.50\% & 11.43\% & 1.38\% & 11.26\% & 0.03\% &  12.11\% \\
  LHA GY & BB- & Spec & 7.29\%  & 34.11\% & 6.93\% & 34.67\% & 5.09\% & 49.57\%  \\
  PIA IM & B+ & Spec & 1.08\% & 11.22\% & 1.02\% & 11.10\% & 0.02\% & 11.66\% \\
  CO FP & B & Spec & 5.62\% & 24.97\% & 5.29\% & 24.94\% & 2.48\% & 29.28\% \\
  \hline
\end{tabular}
\end{table}

\begin{itemize}
    \item[-] We can clearly observe that, both in the NegGamma and NegIG models, 1 year default probabilities are significantly higher than the Merton ones, except for very well rated issuers (A and beyond), which is coherent given the high level of quality of their signature (but, even for these issuers, NegGamma and NegIG probabilities remain strictly bigger than 0, which is not the case with the Merton model); it is also interesting to note that, for longer horizons, most NegGamma and NegIG default probabilities grow slower than Merton default probabilities, notably for speculative issuers. This risk redistribution towards shorter maturities is a clear improvement of classical diffusion models, which underestimate (resp. overestimate) short (resp. long) term default probabilities.
    
    \item[-] NegGamma probabilities are generally higher than NegIG probabilities: this is coherent with the fact that the L\'evy measure decreases in $1/|x|$ in the NegGamma model and in $1/|x|^{3/2}$ in the NegIG model, making for a fatter left tail in the NegGamma model and therefore for more frequent occurrences of downward jumps.
    
    \item[-] On the speculative segment, 1 year default probabilities are typically around 1-3\% in the NegIG and NegGamma models, which is again a clear improvement when compared to the Merton model, whose default probabilities lie around 0-0.5\%, except for very indebted firms with a $V_A/K$ ratio close to 1 such as LHA GY (which, as an airline company, has particularly suffered from the pandemics and saw its stock price $V_E$ drop by 35\% in 2020) or CO FP. It is also interesting to note that, for these 2 particular issuer, NegGamma and NegIG probabilities (5-7\%) remain significantly higher than the Merton probabilities (2-5\%).
    
    
    \item[-] On the same segment but for a time horizon T=5 years, the Merton model returns very high default probabilities (up to 50\% for LHA GY), while the NegIG model returns far more realistic figures (10-20\%, except for the two very endebted issuers cited above, but figures remain far lower than the Merton probabilities).


\end{itemize}

\begin{figure}[ht]
\centering     
\includegraphics[width=.8\textwidth]{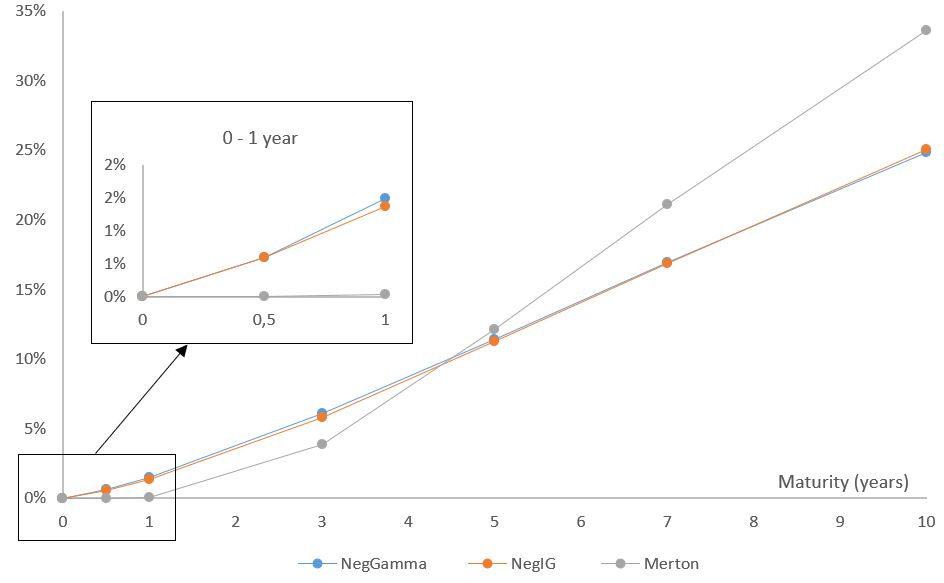}
\caption{Term structure for the risk-netural default probabilities given by several models, for the issuer GET FP. NegGamma and NegIG probabilities display a more regular growth than Merton ones, and are significantly higher for short horizons.} 
\label{fig:term_structure_FRFP}
\end{figure}

In Fig. \ref{fig:term_structure_FRFP}, we plot the term structure of the default probabilities for the GET FP issuer (0 to 10 years), in the NegGamma, NegIG and Merton models. We can recognize the characteristic shape of the Merton default probabilities: short term probabilities are very low, with a sudden increase and an inflection around 5 years and a flattening for longer horizons (which is a direct consequence of the shape of the normal cumulative distribution function N(.)). On the contrary, NegGamma and NegIG default probabilities display a more regular structure across maturities, and a clearly visible risk redistribution towards shorter time horizons, while Merton default probabilities are null for T$\leq$ 1, and reach 35\% on the 10 years horizon, which is completely incoherent with the historical default rates of this category.

\subsubsection{Actual (real-world) default probabilities}

In this subsection, we discuss actual (real world) default probabilities \eqref{def:actual_metrics}. Model parameters are calibrated using algorithm \ref{algorithm_calibration}, while the actual drift $\overline{r}$ is calibrated for each issuer thanks to formula \eqref{actual_drift}, in both cases using market data on a 2 year period. This is to avoid enormous drifts such as $\pm 50\%$ that can arise on a 1 year period (notably 2020) and lead to absurdly small or big default probabilities. In table \ref{tab:negGamma_negNIG_Merton_calibrated 2Y}, we compare the actual default probabilities obtained for the set of issuers (see details in appendix \ref{app:data_set}) with our NegGamma and NegIG models to the one obtained via the classical Merton model, for two distinct horizons $T$ (1 year and 5 years). We also compare with historical default rates of the investment grade and speculative categories displayed in table \ref{tab:histo_rate}.

\begin{table}[ht]
 \caption{Historical default rates for the investment grade and speculative categories (source: \cite{SP20b}).}
 \label{tab:histo_rate}       
 \centering
 \begin{tabular}{c|cc|cc|cc}
 \hline
  \multirow{ 2}{*}{Category} & \multicolumn{2}{c|}{1 year default rate} & \multicolumn{2}{c|}{5 years default rate} &
  \multicolumn{2}{c}{10 years default rate} \\
   & 2010-2019 & 2009 & 2010-2015 & 2009 & 2000-2008 & 2009\\
   \hline
 Investment Grade & 0.01\% & 0.33\% & 0.17\% & 0.63\% & 2.15\% & 0.92\% \\
 Speculative & 2.53\% & 9.95\% & 10.01\% & 16.47\% & 21.3\% & 20.74\% \\
 All ratings & 1.19\% & 4.19\% & 4.61\% & 6.99\% & 9.43\% & 8.88\% \\
 \hline
\end{tabular}
\end{table}

\begin{table}[ht]
 \caption{Comparison of NegGamma, NegIG and Merton actual default probabilities, for time horizons T=1 and 5 years and for several categories/ratings.}
 \label{tab:negGamma_negNIG_Merton_calibrated 2Y}       
 \centering
 \begin{tabular}{lcc|cc|cc|cc}
 \hline
  & Rating & Category &  \multicolumn{2}{c|}{NegGamma} & 
  \multicolumn{2}{c|}{NegIG} & 
  \multicolumn{2}{c}{Merton}
  \\ 
  \hline
  & &  & $T$= 1 y. & $T$=5 y. & $T$= 1 y. & $T$=5 y. &  $T$=1 y. & $T$=5 y. \\
  SAP GY & A & Inv & 0.01\% & 0.04\% & 0.01\% &  0.04\% & 0.00\% &  0.00\% \\
  MRK GY & A & Inv & 0.02\% & 0.09\% & 0.03\%  & 0.09\% & 0.00\% &  0.00\% \\
  AI FP & A- & Inv & 0.02\% & 0.05\%  & 0.02\% & 0.05\% & 0.00\% &  0.00\% \\
  SU FP & A- & Inv & 0.03\%  & 0.06\% & 0.03\% & 0.06\% & 0.00\% &  0.00\% \\
  CRH LN & BBB+ & Inv & 0.37\% & 1.56\% & 0.35\% & 1.50\% & 0.00\% & 0.22\%  \\
  SRG IM & BBB+ & Inv & 0.83\% & 3.39\% & 0.74\% & 3.18\% & 0.00\% & 0.82\% \\
  DAI GY & BBB+ & Inv & 2.11\% & 13.72\% &  1.92\% & 13.50\% & 0.10\%  & 15.76\%  \\
  VIE FP & BBB & Inv &  1.39\% & 7.52\% & 1.25\% & 7.24\% & 0.01\% & 5.16\%  \\
    \hline
  AMP IM & BB+ & Spec & 0.12\% &  0.16\% & 0.12\% & 0.16\% & 0.00\%  &  0.00\% \\
  FR FP & BB+ & Spec & 1.93\% & 11.07\% & 1.79\%  & 10.91\% & 0.12\% & 11.25\%  \\
  EO FP & BB & Spec & 2.72\%  & 20.02\% & 2.57\% & 19.98\% & 0.52\% & 22.42\%   \\
  GET FP & BB- & Spec & 0.68\% & 4.00\% & 0.63\% & 3.83\% & 0.00\% & 1.64\% \\
  LHA GY & BB- & Spec & 9.45\% & 81.73\% & 9.13\% & 81.93\% & 8.56\% & 82.53\%  \\
  PIA IM & B+ & Spec & 0.53\% & 1.71\% & 0.41\% & 1.66\% & 0.00\%  & 0.46\%  \\
  CO FP & B & Spec & 13.82\% & 88.99\% & 13.57\% & 88.82\% & 14.67\% & 87.14\%    \\
  \hline
\end{tabular}
\end{table}

We can observe that:
\begin{itemize}
    \item[-] Like in the risk-neutral case, actual default probabilities are significantly higher in the NegGamma and NegIG models than in the Merton model;
    
    \item[-] On the speculative segment, 1 year actual default probabilities are typically around 1-3\% in the NegIG and NegGamma models (except for the very indebted issuers already discussed), which provides a very good agreement with the 2010-2019 average for 1 year default rates in this category (2.53\%).
    
    \item[-] Still on the speculative segment but for longer (5Y) maturities, actual default probabilities of all models tend to converge to similar values, highlighting the predominant role of the drift for longer horizons, except for issuers with the best signature (such as AMP or PIA) where NegGamma or NegIG default probabilities remain higher than the Merton ones, and closer to 5Y historical default rates.
    
    \item[-] It is interesting to note that, for the best rated issuers retained in our study, 1 and 5 years actual default probabilities provide a remarkably good fit to historical default rates, while Merton probabilities remain equal to $0\%$ for all maturities. For the other issuers of the investment grade segment, NegGamma and NegIG default probabilities appear to slightly overestimate short term historical default rates; we may note, however, that these probabilities can easily be diminished by usual data re-processings of the equity or debt values for such issuers. For instance, an important part of a firm's debt can be concentrated by one of its financial captive, as frequently the case in the automobile industry (DAI GY); it is therefore common to lower the face value of the total debt $K$ by the captive debt amount. Other issuers, for instance operating in the energy sector (like VIE FP), require expensive equipment with long term financing, and in that case a common practice is to integrate operational cashflows to the market capitalization ($V_E$) of the firm when evaluating its default risk.
\end{itemize}

\subsection{Convergence analysis}

In table \ref{tab:conv_analysis}, we indicate the minimal, maximal and average number of steps needed to reach a precision of $10^{-3}$ when applying algorithm \ref{algorithm_calibration} to the data set of table \ref{tab:data_set}, as well as the corresponding elapsed time (in seconds). The calculations are made on a personal computer with Intel(R) Core(TM) i5-8265U CPU @1.60GHz. Clearly, one can observe that the number of steps, although increasing with maturity, remains very small and that elapsed time is around or even less than a second.

\begin{table}[ht]
 \caption{Comparison the convergence speed of algorithm \ref{algorithm_calibration} for NegGamma, NegIG and Merton models for various time horizons.}
 \label{tab:conv_analysis}       
 \centering
 \begin{tabular}{c|ccc|ccc|ccc}
 \hline
  &  \multicolumn{3}{c|}{NegGamma} & 
  \multicolumn{3}{c|}{NegIG} & 
  \multicolumn{3}{c}{Merton}
  \\ 
  \hline
  Number of steps & Min & Max & Av. & Min & Max & Av. & Min & Max & Av. \\
   \hline
      $T =$ 1Y & 3 & 7 & 4.48 & 2 & 6 & 3.71 & 2 & 7 & 4.05 \\
      $T =$ 5Y & 3 & 11 & 5.95 & 3 & 8 & 4.76 & 3 & 13 & 6.61 \\
      $T =$ 10Y & 3 & 13 & 6.90 & 3 & 10 & 5.38 & 3 & 17 & 8.24 \\
      $T =$ 15Y & 4 & 15 & 7.43 & 3 & 11 & 6.00 & 4 & 18 & 8.86 \\
    \hline
    Elapsed time (in sec) & Min & Max & Av. & Min & Max & Av. & Min & Max & Av. \\
    \hline
      $T =$ 1Y & 0.48 & 1.25 & 0.74 & 0.16 & 0.53 & 0.32 & 0.08 & 0.31 & 0.16 \\
      $T =$ 5Y & 0.57 & 2.23 & 1.11 & 0.25 & 0.71 & 0.42 & 0.11 & 0.55 & 0.27 \\
      $T =$ 10Y & 0.56 & 2.65 & 1.33 & 0.29 & 0.88 & 0.49 & 0.12 & 0.61 & 0.31 \\
      $T =$ 15Y & 0.75 & 3.30 & 1.45 & 0.33 & 1.12 & 0.58 & 0.15 & 0.63 & 0.33 \\
  \hline
\end{tabular}
\end{table}

\section{A symmetric model}\label{sec:symmetric}

Let us now consider the case where the stochastic process $X_t$ in \eqref{SDE} is a symmetric L\'evy process, that is, a process with equally probable downward and upward jumps. We will consider a Variance Gamma (VG) process, which, for simplicity, will be assumed to be symmetric. VG processes have already been introduced in credit risk models in \cite{Fiorani06,Fiorani10}, but using a calibration from CDS indices (Markit CDX.NA.HY and CDX.NA.IG). Again, we adopt a different approach, by taking advantage of recently derived pricing formulas for the European call to perform algorithm \ref{algorithm_calibration} on each firm's own equity, and deduce its default probability.

\subsection{The symmetric Variance Gamma (symVG) credit risk model}\label{subsec:sym_VG}

The Variance Gamma (VG) process, which has been popularized in financial modelling in \cite{Madan98}, can be defined by a time changed drifted Brownian motion:
\begin{equation}\label{VG_process}
    V_t \, := \, \theta G_t \, + \, \sigma  W_{G_t},
\end{equation}
where $G_t$ is a Gamma process of mean $\mu:=1$ and variance $\nu > 0$ (note that this parametrization is equivalent to the rate/shape parametrization of subsection \ref{subsec:Neg_Gamma}, for $\rho=\mu^2 / \nu$ and $\lambda= \mu / \nu$). $\sigma > 0$ is the scale parameter and $\theta\in\mathbb{R}$ the asymmetry, or sknewness parameter; in the case $\theta = 0$ the distribution of the VG process is symmetric around the origin (this model was introduced earlier, in \cite{Madan90}). The definition \eqref{VG_process} shows that the VG process is actually a so-called normal mean mixture \citep{Barndorff82}, where the mixing distribution is given by the Gamma distribution and materializes the passage of the business time, jumps in the process materializing periods of intense trading activity. When $\nu$ grows high then jumps are more frequent and the business time admits staircase-like realizations; on the contrary, the $\nu\rightarrow 0$ configuration corresponds to a linear passage of time.

The L\'evy measure of the VG process is known to be
\begin{equation}\label{VG_measure}
    \Pi_{V}( \ud x) \, = \, \frac{e^{\frac{\theta x}{\sigma^2}}}{\nu |x|} e^{ -\frac{\sqrt{  \frac{\theta^2}{\sigma^2} + \frac{2}{\nu} }}{\sigma} |x| } \, \ud x
\end{equation}
which shows that the VG process is a particular case of a CGMY process \citep{Carr02}. As the measure \eqref{VG_measure} is defined on the whole real axis, realizations of the VG process feature both upward and downward jumps; while downward jumps still materialize a brutal drop in a firm's assets, we may think of upward jumps as cash injections from governmental institutions or central banks in order to help recapizalizing severely indebted issuers. Let us also remark that when $\theta=0$ then the L\'evy measure is symmetric around 0 (and in that case positive and negative jumps occur with equal probability). The L\'evy symbol can be computed from \eqref{VG_measure} and the L\'evy-Khintchine formula \eqref{Levy_Khintchine}, resulting in
\begin{equation}\label{psi_VG}
    \psi_V(u)
    \, = \, 
    - \frac{1}{\nu} 
    \log\left( 1 - i\theta\nu u + \frac{\sigma^2\nu}{2} u^2 \right)
\end{equation}
from which we easily deduce the cumulant generating function as well as the first cumulants and central moments:
\begin{equation}\label{cumulants_symVG}
    \left\{
    \begin{aligned}
        &  \overline{\mu}_{V}^{(1)} \,  = \,  \overline{r} + \frac{1}{\nu}\log \left( 1 - \theta\nu - \frac{\sigma^2\nu}{2}  \right) + \theta \quad \mathrm{(Mean)}  \\
        & \mu_V^{(2)} \,  = \, \theta^2\nu + \sigma^2 \quad \mathrm{(Variance)} \\
        & \mu_V^{(3)}   \, = \, \frac{\theta\nu(2\theta^2\nu+3\sigma^2)}{(\theta^2\nu + \sigma^2)^{3/2}}
        \quad \mathrm{(Skewness)} \\
        & \mu_V^{(4)} \, = \,
        \frac{
        3\sigma^4\nu + 12 \sigma^2\theta^2\nu^2 + 6 \theta^4\nu^3 + 3\sigma^4 + 6\sigma^2\theta^2\nu + 3\theta^4\nu^2
        }{(\theta^2\nu + \sigma^2)^2}
        \quad \mathrm{(Kurtosis)}
        .
    \end{aligned}
    \right.
\end{equation}
If we assume $\theta=0$, the tails are symmetric and the skewness is equal to $0$ and the kurtosis becomes
\begin{equation}
    \mu_V^{(4)} \overset{\theta\rightarrow 0}{\longrightarrow} 3(\nu+1)
    ,
\end{equation}
highlighting the role of $\nu$ as the measure of excess kurtosis of the distribution.

\subsection{Default metrics}\label{subsec:sym_VG_metrics}
When the VG process $V_t$ is symmetric, its density is known to be
    \begin{equation}
        f_{V}(x,t) \, = \, \frac{2}{ \sqrt{2\pi} \Gamma(\frac{t}{\nu}) (\sigma^2\nu)^{\frac{t}{\nu}} } \, 
        \left( \frac{ |x| }{ \frac{1}{\sigma} \sqrt{\frac{2}{\nu}} }  \right)^{\frac{t}{\nu}-\frac{1}{2}} \, 
        K_{\frac{t}{\nu}-\frac{1}{2}} \left( \frac{1}{\sigma} \sqrt{\frac{2}{\nu}} |x| \right)
    \end{equation}
and, taking $\theta=0$ in \eqref{psi_VG}, we easily obtain the convexity adjustment
\begin{equation}\label{omega_VG}
    \omega_V \, = \, \frac{1}{\nu}\log \left( 1 - \frac{\sigma^2\nu}{2}  \right)
    .
\end{equation}
From definition \ref{def:metrics}, the distance to default in the symVG model is therefore
\begin{equation}
    k_V \, = \, \log\frac{V_A}{K}
    + \left( r + \frac{1}{\nu}\log \left( 1 - \frac{\sigma^2\nu}{2}  \right)\right) T
\end{equation}

\begin{remark}
    In the low kurtosis regime $\nu\rightarrow 0$, Taylor expanding \eqref{omega_VG} yields $\omega_V \sim -\sigma^2 /2$, thus recovering the Gaussian adjustment; in other words, the symVG model recovers the Merton model in its low kurtosis limit.
\end{remark}
The cumulative distribution function of the Variance Gamma distribution is not known in exact form but easily accessible from any programming language. For instance in R one can use the CDF function applied to the Variance Gamma density (denoted by "dvg" in the Variance Gamma package), and the default probability in the symVG model can be written formally as:
\begin{equation}
    F_V(-k_V) \, = \, 
    \mathrm{CDF}\left( \mathrm{dvg} 
    \left( -\frac{k_V}{\sqrt{T}}, \, \mu=0, \, \frac{\sigma}{\sqrt{T}}, \, \theta = 0  , \, \frac{\nu}{t} \right)  \right)
    .
\end{equation}

\subsection{Equity value}\label{subsec:sym_VG_equity}
An exact formula for the price of the European call under the VG process has been established in \cite{Madan98}, but it involves complicated products of Bessel and hypergeometric functions; in practice, Fourier techniques are favored \citep{Lewis01}, notably because of the relative simplicity of the characteristic exponent \eqref{psi_VG}. This technology, however, is not adapted to the implementation of algorithm \ref{algorithm_calibration} because if would call for solving an equation involving an integral in the Fourier space. Instead, we propose to use a recently derived pricing formula for the European call in the symmetric VG model, which takes the form of quickly convergent series of powers of the log forward moneyness (or, in the context of default modelling, of the distance to default). To that extent, let us introduce the notations
\begin{equation}
    T_\nu \, := \, \frac{T}{\nu} - \frac{1}{2}
    \hspace{1cm}
    \sigma_\nu \, := \, \sigma\sqrt{\frac{\nu}{2}}
    ,
\end{equation}
and assume $T_\nu\notin\mathbb{Q}$. For $X,Y\in\mathbb{R}$, let us also define the quantity

\begin{multline}
    a_{n_1,n_2}(X,Y) \, := \,
    \frac{(-1)^{n_1}}{n_1!}
    \left[
     \frac{\Gamma(\frac{-n_1+n_2+1}{2} + T_\nu)}{\Gamma(\frac{-n_1+n_2}{2}+1)} 
     \left(-\frac{X}{Y}\right)^{n_1} Y^{n_2} \right.
     \\
    \left.
    +
    \, 2  \, \frac{\Gamma(-2n_1-n_2-1 -2T_\nu)}{\Gamma(-n_1+\frac{1}{2}-T_\nu)} 
     \left(-\frac{X}{Y}\right)^{2n_1+1+2T_\nu} (-X)^{n_2}
    \right]
    .
\end{multline}

\begin{proposition}\label{prop:symVG_equity_price}
    The equity value of a firm in the symVG credit risk model can be written as:
     \begin{itemize}
        \item[-]
         If $ k_{V} < 0$, 
        \begin{equation}
            V_E \, = \, \frac{Ke^{-rT}}{2\Gamma(\frac{T}{\nu})} \sum\limits_{\substack{n_1 = 0 \\ n_2 = 1}}^\infty
            a_{n_1,n_2}(k_V,\sigma_\nu)
            .
        \end{equation}
       \item[-]
         If $ k_{V} > 0$, 
        \begin{equation}
            V_E \, = \, 
            V_A - K e^{-rT} - \frac{Ke^{-rT}}{2\Gamma(\frac{T}{\nu})} \sum\limits_{\substack{n_1 = 0 \\ n_2 = 1}}^\infty
             a_{n_1,n_2}(k_V,-\sigma_\nu)
             .
        \end{equation}
        \item[-]
        If $ k_{VG} = 0 $,
        \begin{equation}                   V_E \, = \,                  \frac{Ke^{-rT}}{2\Gamma(\frac{T}{\nu})} \sum\limits_{n=1}^\infty \frac{\Gamma(\frac{n+1}{2}+T_\nu)}{\Gamma(\frac{n}{2}+1)} \, \sigma_\nu^n 
            .
        \end{equation} 
        \end{itemize}
\end{proposition}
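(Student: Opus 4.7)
My plan is to start from the equity formula in Proposition \ref{prop:equity}, substitute the symmetric VG density (with $\theta=0$), and handle the integral by Mellin–Barnes techniques, which are natural here because the modified Bessel function $K_{T_\nu}$ in the density has a clean Mellin representation and the payoff factor $e^{k_V+x}-1$ decomposes into factors whose Mellin transforms are Gamma functions. The series in the statement contains two families of terms, one of type $(-k_V/\sigma_\nu)^{n_1}\sigma_\nu^{n_2}$ and one of type $(-k_V/\sigma_\nu)^{2n_1+1+2T_\nu}(-k_V)^{n_2}$, which is the unmistakable signature of residues taken at the two disjoint arithmetic progressions of poles generated by the Gamma factors $\Gamma\bigl((s-T_\nu)/2\bigr)$ and $\Gamma\bigl((s+T_\nu)/2\bigr)$; this fixes the shape of the argument. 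The assumption $T_\nu\notin\mathbb{Q}$ is then exactly what guarantees simplicity of all poles, so residues can be read off termwise.

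Concretely, for $k_V<0$ I would first write
\begin{equation}
V_E \;=\; Ke^{-rT}\!\!\int_{-k_V}^{\infty}\!\!\bigl(e^{k_V+x}-1\bigr)f_V(x,T)\,\ud x,
\end{equation}
inject the Mellin–Barnes representation
\begin{equation}
K_{T_\nu}(z)\;=\;\tfrac14\,\tfrac{1}{2\pi i}\int_{c_1}\Gamma\!\Bigl(\tfrac{s-T_\nu}{2}\Bigr)\Gamma\!\Bigl(\tfrac{s+T_\nu}{2}\Bigr)(z/2)^{-s}\,\ud s
\end{equation}
for the Bessel factor, and expand $e^{k_V+x}-1=\sum_{n_2\ge 1}(k_V+x)^{n_2}/n_2!$ to introduce a second complex variable by representing $(k_V+x)^{n_2}$ (or equivalently the whole payoff factor) by a Mellin–Barnes integral as well. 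After interchanging sums and integrals the inner $x$-integral over $(-k_V,\infty)$ is elementary and produces a Beta/Gamma factor in the auxiliary variable. What remains is a double Mellin–Barnes integral in $(s,s')$ whose integrand is a product of Gamma functions times $k_V$- and $\sigma_\nu$-powers; closing both contours to the left (admissible because $k_V<0$ controls the exponential decay) and picking up the simple poles of $\Gamma\bigl((s\pm T_\nu)/2\bigr)$ and $\Gamma(s')$ yields precisely the two terms displayed in $a_{n_1,n_2}(k_V,\sigma_\nu)$, with the prefactor $1/(2\Gamma(T/\nu))$ emerging from the Bessel normalization and the doubling produced by $|x|$-symmetry.

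For $k_V>0$ I would not redo the contour analysis but instead invoke put--call parity: from the martingale identity \eqref{martingale} applied to \eqref{RN_pricing_2} one has $V_E-Ke^{-rT}\mathbb{E}^{\mathbb Q}[(1-e^{k_V+X_T})^+]=V_A-Ke^{-rT}$, so it suffices to expand the put integral $Ke^{-rT}\int_{-\infty}^{-k_V}(1-e^{k_V+x})f_V(x,T)\,\ud x$. Because the symmetric VG density is even in $x$, the substitution $x\mapsto -x$ turns this put integral into the same Mellin--Barnes structure as before but with $\sigma_\nu$ replaced by $-\sigma_\nu$ inside $a_{n_1,n_2}$, which yields the stated formula. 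The degenerate case $k_V=0$ follows by letting $k_V\to 0$: the first family of residues collapses to the single term $(-X/Y)^{n_1}\to \delta_{n_1,0}$ while the second family vanishes because of the factor $(-X)^{n_2}$, leaving the displayed single series indexed by $n_2=n$.

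The main obstacle I anticipate is purely technical rather than conceptual: justifying the Fubini-type interchange of the $x$-integral with the Mellin contours, and verifying that the contour can be pushed to infinity with the arc contribution vanishing so that the residue sum is exact and not merely asymptotic. This requires controlling the Gamma asymptotics along vertical lines (Stirling) together with the exponential factor $|k_V/\sigma_\nu|^{-\Re s}$ to locate an admissible fundamental strip, and is where the hypothesis $T_\nu\notin\mathbb{Q}$ (preventing coincidence of the two pole families) is used in a nontrivial way. Convergence of the final double series is then a consequence of the same Stirling estimates applied to the residue expressions, and is expected to be geometric in $|k_V/\sigma_\nu|$ and $\sigma_\nu$, explaining the authors' remark on rapid convergence.
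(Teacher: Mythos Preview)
The paper does not give a self-contained proof: its entire argument is the single line ``See \cite{Aguilar20a}'', deferring to Aguilar, \emph{Some pricing tools for the Variance Gamma model} (IJTAF, 2020). Your proposal is precisely the method developed in that reference: represent the Bessel factor $K_{T_\nu}$ by its Mellin--Barnes integral, couple it with a Mellin representation of the payoff, evaluate the residues at the two pole ladders of $\Gamma\bigl((s\pm T_\nu)/2\bigr)$ (which are simple under $T_\nu\notin\mathbb{Q}$), and read off the double series. So your approach is correct and coincides with the paper's (outsourced) proof.

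One minor remark: in the cited work the cases $k_V>0$ and $k_V<0$ are both obtained by closing the Mellin contour in opposite half-planes (the sign of $k_V$ dictates which direction gives exponential decay on the arcs), rather than via put--call parity plus the substitution $x\mapsto -x$. Your parity shortcut is legitimate, but the heuristic ``$\sigma_\nu\mapsto -\sigma_\nu$'' needs a little care since $a_{n_1,n_2}$ involves non-integer powers $(-X/Y)^{2n_1+1+2T_\nu}$; the cleanest way to see the sign flip is to note that reversing the contour direction swaps which pole family sits to the right, which is equivalent to the stated $Y\mapsto -Y$ once the branch is fixed consistently. The technical obstacles you flag (Fubini, vanishing of arc contributions via Stirling, geometric convergence of the residue series) are exactly the ones addressed in \cite{Aguilar20a}.
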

\begin{proof}
    See \cite{Aguilar20a}.
\end{proof}

\subsection{Parameter calibration and default probabilities}\label{subsec:sym_VG_calibration}

The calibration procedure is, like for one-sided models, based on algorithm \ref{algorithm_calibration} and proposition \ref{prop:symVG_equity_price}. Note that, typically, it is sufficient to truncate the series at $n=m=10$ to get a precision of $10^{-4}$ on equity prices, and that machine precision is reached between 15 and 20 iterations (see an example on the issuer GET FP in table \ref{tab:sym_VG_convergence}); convergence can be slightly slower if the ratio $V_A/K$ becomes very small, but remains very efficient in all cases (and is particularly accelerated for short maturities). We also refer to \cite{AK21} for a precise discussion and comparison of the pricing formula in proposition \ref{prop:symVG_equity_price} with state of the art numerical pricing techniques such as Fast Fourier Transform and their recent refinements (COS, PROJ etc.).

Given the moments \eqref{cumulants_symVG} for $\theta=0$, we need to consider the function
\begin{equation}\label{M_V}
    \M_V \, : 
    \begin{bmatrix}
        \sigma \\
        \nu
    \end{bmatrix}
    \, \longrightarrow \, 
    \begin{bmatrix}
        \sigma^2 \\
        3\nu
    \end{bmatrix}
\end{equation}
whose inversion is immediate. Calibration results and default probabilities are displayed in table \ref{tab:sym_VG}; we only estimate risk-neutral default probabilities; estimation of real-world is straightforward using the same technique as for one-sided models described in subsection \ref{subsec:actual_drift}. 

\begin{table}[ht]
 \caption{Parameter calibration and 1 year risk-neutral default probabilities in the symVG model, compared with NegGamma and Merton default probabilities}
 \label{tab:sym_VG}       
 \centering
 \begin{tabular}{l|ccc|ccc}
  \hline
  \multirow{ 2}{*}{Issuer} & \multicolumn{3}{|c|}{Calibrated parameters} & \multicolumn{3}{c}{Default probabilities} \\
  & $\sigma$ & $\nu$ & $V_A$ (in MM) & SymVG & Merton & NegGamma  \\
  \hline
  SAP GY & 0.2873 & 2.2526 & 180 904 & 0.00 \% & 0.00\% & 0.01\% \\
  AI FP & 0.2340 & 3.5774 & 78 923 & 0.03 \% & 0.00\% & 0.08\% \\
  SU FP & 0.3246 & 3.9171 & 71 472.2 & 0.07\% & 0.00\% & 0.14\% \\
  CRH LN & 0.3038 & 2.9005 & 33 948.2 & 0.57\% & 0.01\% & 1.10\%  \\
  SRG IM & 0.1942 & 6.3719 & 29 556.2 & 0.91\% &0.02\% & 1.73\%  \\
  DAI GY & 0.1026 & 3.5876  & 213800  & 1.61\% & 0.55\% & 3.26\%  \\
  VIE FP & 0.1618 & 4.4099 & 27 281.3 & 1.42\% & 0.21\% & 2.62\% \\
  \hline
  AMP IM & 0.3598 & 4.809 & 8 627.48 & 0.26\% & 0.00\% & 0.50\% \\
  FR FP & 0.2741 & 3.1256 & 11 931.5 & 1.02\% & 0.62\% & 3.12\% \\
  EO FP & 0.2754 & 1.7378 & 10 010.3  & 1.69\% & 0.65\% & 3.06\% \\
  GET FP & 0.2402 & 3.2453 & 11 666.7 & 0.79\% & 0.03\% & 1.50\% \\
  LHA GY & 0.2092  & 2.5558 & 14 700 & 4.40\% & 5.09\% & 7.29\% \\
  PIA IM & 0.2513 & 1.8526 & 1 494.92 & 0.57\% & 0.02\% & 1.08\%  \\
  CO FP & 0.0713 & 2.6652 & 16 427.4 & 3.30\% & 2.48\% & 5.62\% \\
 \hline
\end{tabular}
\end{table}

It is interesting to note that 1 year symVG default probabilities are all comprised between 1 year Merton and 1 year NegGamma default probabilities (except for LHA GY). The fact that symVG probabilities are lower than the probabilities computed in a one-sided model (the NegGamma model) is no surprise, since the symVG model features also positive jumps (i.e. sudden raises in asset prices), which can diminish the probability for $V_A(T)$ to be lower than $K$; the symVG model is therefore more tempered than the NegGamma model, but its default probabilities remain very significantly higher than in the Merton model, except for top quality issuers.

\begin{table}[ht]
 \caption{Convergence of the equity formula in proposition \ref{prop:symVG_equity_price} for a test issuer (GET FP) and for various truncations at $n=m=N_{max}$. Parameters (estimated in table \ref{tab:sym_VG}): $V_A=$11 666.7, $K=$ 4 998, $T=1$, $r=0$, $\sigma=0.2402$, $\nu=3.2453$.}
 \label{tab:sym_VG_convergence}       
 \centering
 \begin{tabular}{ccccc}
  \hline
  & $N_{max}=7$ & $N_{max}=10$ & $N_{max}=15$ & $N_{max}=20$  \\
  \hline
  $V_E$  & 6 693.990  & 6 676.968  & 6 676.847  & 6 676.847 \\
  Relative error & 0.256760\% & 0.001815\%  & 0.000001\%   & 0\% (machine precision) \\ 
 \hline
\end{tabular}
\end{table}




\section{Concluding remarks}\label{sec:conclusion}

In this study, we have generalized the notions of distance to default and default probabilities to the class of pure jump structural models, and studied two sub-classes: one-sided models (NegGamma, NegIG) and symmetric models (symVG). One-sided models are particularly tractable, thanks to the closed formulas for the equity value that we have established (propositions \ref{prop:GammaNeg_equity_value} and \ref{prop:IGNeg_equity_value}). Such closed formulas make the calibration algorithm easy to implement.

One-sided models produce higher short term default probabilities (both risk-neutral and real world) than the Merton model. In particular, these probabilities typically lie between 0.5\% and 3\% for speculative issuers, which is close to the historical 1 year default rate of this category. On the investment grade segment, high quality issuers (with a ranking higher or equal to A) have a small default probability (0.01-0.15\%) which is coherent with the quality of their signature; the figures, however, remain clearly higher than the Merton ones.

In double-sided models (symVG), the presence of upward jumps moderates the probability for $V_A(T)$ to be lower than $K$, and therefore default probabilities are lower than NegGamma and NegIG default probabilities, while remaining higher than the Merton ones. The model is, consequently, well suited to investment grade issuers for longer horizons, but as the equity value is only accessible via truncated series expansions, the calibration algorithm takes more time to converge.

Future work should include the consideration of other dynamics for asset prices, such as a double-sided process with asymmetric distribution of jumps; recent pricing formulas established for instance in the case of an asymmetric NIG process could be used for computing the equity value, but their series nature will call for several optimization in the calibration algorithm to maintain a satisfactory level of performance. 

\section*{Acknowledgments}
We thank Alexander Lipton for his interest in this work and his many valuable comments. We also thank two anonymous reviewers whose detailed reports greatly helped us improve the quality of the manuscript.

\singlespacing

\doublespacing

\appendix

\section{Notations used in the paper}\label{app:notations}

\paragraph{Special functions}
\hspace{0.5cm}

\noindent Standard normal cumulative distribution function:
\begin{equation}
    N(x) \, := \, 
    \frac{1}{\sqrt{2\pi}} \,
    \int\limits_{-\infty}^{x} e^{-\frac{y^2}{2}} \, \ud y
\end{equation}

\noindent Lower incomplete Gamma function:
\begin{equation}\label{lower_gamma}
    \gamma(a,z) \, := \, \int\limits_0^{z} e^{-x} x^{a-1} \, \ud x
    .
\end{equation}

\noindent Upper incomplete Gamma function:
\begin{equation}\label{upper_gamma}
    \Gamma(a,z) \, := \, \int\limits_z^{\infty} e^{-x} x^{a-1} \, \ud x
    .
\end{equation}

\noindent Bessel function of the second kind (MacDonald function):
\begin{equation}
    \K_\nu(z) \, := \, \frac{1}{2} \, \left(\frac{z}{2}\right)^\nu \,  \int\limits_0^{\infty} \, 
    e^{-x - \frac{z^2}{4x} } \, x^{-\nu - 1} \, \ud x
    .
\end{equation}

\noindent Shuster's integrals
\footnote{This is a consequence of the Cauchy - Schl\"{o}milch substitution
\begin{equation}
    \int\limits_\mathbb{R} f(x) \, \ud x \, = \, \int\limits_\mathbb{R} f\left( x - \frac{1}{x} \right) \, \ud x
\end{equation}
which is itself a consequence of the more general Glasser's master theorem \citep{Glasser83}.
} \citep{Shuster68}:
\begin{equation}\label{Shuster}
    \left\{
    \begin{aligned}
        & \int\limits_{0}^x \frac{e^{-\lambda\frac{(y-\mu t)^2}{2\mu^2 y}}}{y^{\frac{3}{2}}} \, \ud y \, = \, \sqrt{\frac{2\pi}{\lambda t^2}}
        \left[ 
        N \left( \sqrt{\frac{\lambda t^2}{x}} \left( \frac{x}{\mu t} -1 \right) \right)
        \, + \, 
        e^{2\frac{\lambda t}{\mu}} \, 
        N \left( - \sqrt{\frac{\lambda t^2}{x}} \left( \frac{x}{\mu t} + 1 \right) \right)
        \right]
        \\
        & \int\limits_{x}^\infty \frac{e^{-\lambda\frac{(y-\mu t)^2}{2\mu^2 y}}}{y^{\frac{3}{2}}} \, \ud y \, = \, 
        \sqrt{\frac{2\pi}{\lambda t^2}}
        \left[ 
        N \left( - \sqrt{\frac{\lambda t^2}{x}} \left( \frac{x}{\mu t} -1 \right) \right)
        \, - \, 
        e^{2\frac{\lambda t}{\mu}} \, 
        N \left( - \sqrt{\frac{\lambda t^2}{x}} \left( \frac{x}{\mu t} + 1 \right) \right)
        \right]
        .
    \end{aligned}
    \right.
\end{equation}

\paragraph{Model notations}
\hspace{0.5cm}

\noindent Characteristic function and L\'evy symbol:
\begin{equation}
    \Psi_X(u,t) \, := \, \mathbb{E} [e^{i u X_t}] \, := \, e^{t\psi_X(u)}
    .
\end{equation}

\noindent Moment generating function (double-sided Laplace transform):
\begin{equation}
    M_X(p,t) \, := \, \mathbb{E} [e^{p X_t}] \, = \, \Psi(-ip,t)
    .
\end{equation}

\noindent Cumulant generating function:
\begin{equation}
    \kappa_X(p) \, := \, \log M_X(p,1) \, = \,  \psi_X(-ip)
    .
\end{equation}

\noindent $n^{th}$ cumulant:
\begin{equation}
    \kappa_X^{(n)} \, := \, \frac{\partial^n \kappa_X(p)}{\partial p^n}\big\vert_{p=0}
    .
\end{equation}

\noindent Standardized central moments of a time series $\mathcal{T}_X$:
\begin{equation}
    \mu_X^{(1)} \, := \, \mathbb{E} \left[ \T_X \right]
    , \quad
    \mu_X^{(2)} \, := \, \mathbb{E} \left[ (\T_X-\mu_X^{(1)})^2 \right]
    ,
\end{equation}
\begin{equation}\label{def_moments}
    \mu_X^{(n)} \, := \, 
    \frac{\mathbb{E} \left[ (\T_X-\mu_X^{(1)})^k \right]}
    {\left( \mu_X^{(2)}\right)^{k/2} }
   \quad
   (n > 2)
\end{equation}

\noindent Subset of standardized central moments of order $n \geq 2$:
\begin{equation}\label{def_moments_set}
    \mu_X^+ \, := \, \left\{  \mu_X^{(n)} , n\geq 2 \right\}
    .
\end{equation}

\noindent Cumulative distribution function (process with density $f_X(x,t)$):
\begin{equation}
    F_X(x) \, := \, \int\limits_{-\infty}^{x} \, f_X(y,T) \, \ud y
    .
\end{equation}

\noindent Martingale adjustment:
\begin{equation}
    \omega_X \, := \, -\psi_X(-i) \, = \, -\kappa_X(1) = \ -\log\mathbb{E} \left[ e^{X_1} \right]
    .
\end{equation}

\noindent Risk-neutral and Actual distance to default:
\begin{equation}
    k_X \, := \, \log\frac{V_A}{K} + (r+\omega_X)T
    ,
    \quad
    \overline{k}_X \, := \, \log\frac{V_A}{K} +  (\overline{r}+\omega_X)T
    .
\end{equation}

\noindent Risk-neutral and Actual default probability:
\begin{equation}
    P_X \, := \, F_X(-k_X)
    ,
    \quad
    \overline{P}_X \, := \, F_X(-\overline{k}_X)
    .
\end{equation}

\section{Data set}\label{app:data_set}
To test the models, we have chosen a set of issuers from various European markets, which are representative of several industrial sectors and S\&P rating categories. In table \ref{tab:data_set} we provide details on the issuers and the debt level chosen in our study.

\begin{table}[ht]
 \caption{Data set used in the paper; debt values are as of 13 oct. 2020. Source: Bloomberg.}
 \label{tab:data_set}       
 \centering
 \begin{tabular}{ccccc}
 \hline
  Ticker & Name & Industry group & Total debt (in MM) & Category  \\  
  \hline
  SAP GY & SAP SE & Software & 16 196 & Investment \\
  MRK GY & Merck KGAA  & Pharmaceuticals & 14 180 & Investment \\
  AI FP & Air Liquide SA & Chemicals & 14 730 & Investment \\
  SU FP & Schneider Electric SE & Electrical Components 
  & 8 473 & Investment \\
  CRH LN & CRH PLC & Building materials & 10 525 & Investment \\
  SRG IM & Snam SPA & Gas & 14 774 & Investment \\
  DAI GY & Daimler AG & Auto manufacturers & 161 780 & Investment  \\
  VIE FP & Veolia Environment & Water & 16 996 & Investment  \\ 
  AMP IP & Amplifon SPA & Pharmaceuticals & 1 339 & Speculative \\
  FR FP & Valeo SA & Auto parts \& Equipment & 4 879 & Speculative\\
  EO FP & Faurecia & Auto parts \& Equipment & 4 838 & Speculative \\
  GET FP & Getlink SE & Transportation & 4 998 & Speculative \\
  LHA GY & Deutsche Lufthansa & Airlines & 10 106 & Speculative \\
  PIA IM & Piaggio & Motorcycles & 609 & Speculative \\
  CO FP & Casino & Foods & 14 308 & Speculative \\
  \hline
\end{tabular}
\end{table}

\section{Stability of the parameters}\label{app:time_stability}

In figure \ref{fig:ParamEvol}, we plot the evolution of calibrated parameters in function of debt maturity $T$, for 2 representative issuers and the NegGamma, NegIG and Merton models. The parameters do not significantly evolve with $T$; this is perfectly consistent with the fact that the time series $\T_X$ of the log-returns of $V_A(t)$ used for the calibration is not affected by the choice made for the debt maturity $T$.

\begin{figure}[h!t!b!]
\centering     
\subfigure{\includegraphics[width=.512\textwidth]{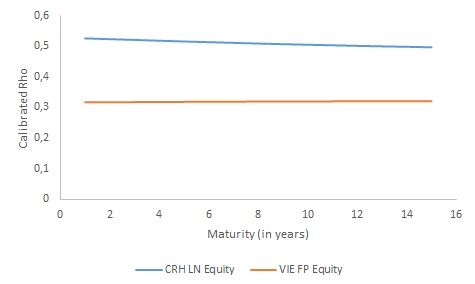}}\hspace{-1.9em}
\subfigure{\includegraphics[width=.512\textwidth]{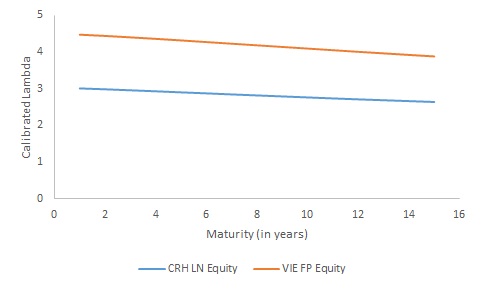}}
\subfigure{\includegraphics[width=.512\textwidth]{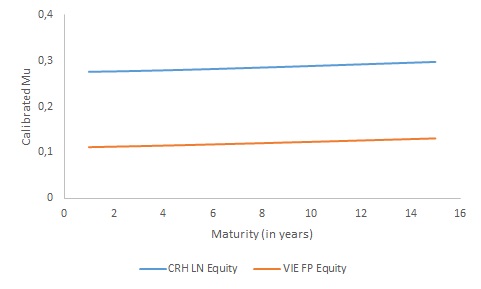}}\hspace{-1.9em}
\subfigure{\includegraphics[width=.512\textwidth]{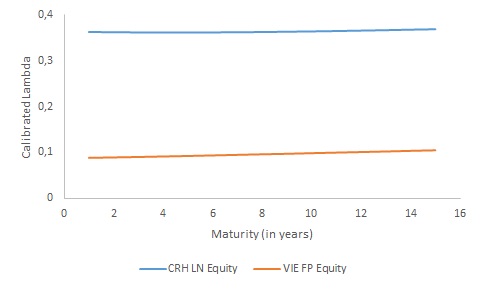}}
\subfigure{\includegraphics[width=.512\textwidth]{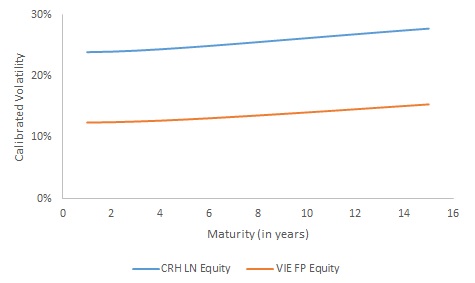}}
\caption{\small{Value of the parameters of each model for different maturities : the first line corresponds to the negGamma model (Left : $\rho$, Right : $\lambda$). The second line corresponds to the negIG model (Left : $\mu$, Right : $\lambda$). The third line corresponds to the value of $\sigma$ computed with the Merton model.
}}\label{fig:ParamEvol}
\end{figure}

\section{Code}\label{app:code}

The R code written to implement algorithm \ref{algorithm_calibration} on the whole data set displayed in table \ref{tab:data_set} and to estimate default probabilities is publicly available at \url{https://github.com/Voxinat/CreditRiskModels}.

\end{document}